\newtheorem{theorem}{Theorem}[section]                   %theory and  corollary
\newtheorem{definition}{Definition}[section]                   %definition
\newtheorem{lemma}{Lemma}[section]
\newtheorem{remark}{Remark}[section]
\date{}
\begin{document}

\title{Efficient iterative thresholding algorithms with functional feedbacks and convergence analysis}

\author{
Ningning Han\thanks{School of Mathematics, Tianjin University, Tianjin 300350, China.
\textsf{Email:} ning\underline{\hbox to 2mm{}}ninghan@tju.edu.cn.},
 Shidong Li \thanks{Department of Mathematics, San Francisco State University, San Francisco, CA94132, USA. \textsf{Email:} shidong@sfsu.edu.},
Zhanjie Song\thanks{Corresponding author. School of Mathematics, Tianjin University, Tianjin 300350, China.
\textsf{Email:} zhanjiesongtju@gmail.com}
}
\maketitle

\begin{abstract}
An accelerated class of adaptive scheme of iterative thresholding algorithms is studied analytically and empirically.  They are based on the feedback mechanism of the null space tuning techniques (NST+HT+FB).  The main contribution of this article is the accelerated convergence analysis and proofs with a variable/adaptive index selection and different feedback principles at each iteration. These convergence analysis require no longer a priori sparsity information $s$ of a signal.
%key theory in this paper is the concept that the number of indices selected at each iteration should be considered in order to speed up the convergence.
It is shown that uniform recovery of all $s$-sparse signals from given linear measurements can be achieved under reasonable (preconditioned) restricted isometry conditions.  Accelerated convergence rate and improved convergence conditions are obtained by selecting an appropriate size of the index support per iteration.   The theoretical findings are sufficiently demonstrated and confirmed by extensive numerical experiments. It is also observed that the proposed algorithms have a clearly advantageous balance of efficiency, adaptivity and accuracy compared with all other state-of-the-art greedy iterative algorithms.
\end{abstract}

{ Key Words:} \ Sparse signal;\  Null space tuning;\ Thresholding;\ Feedback;\\

\section{Introduction}
%Traditional  states that the sampling rate of a signal has to be twice of its highest frequency in order to ensure exact reconstruction through linear operations.

The emerging sparse approximation or compressed sensing \cite{Donoho,Can,Cands} has broken through the traditional notion of the Nyquist-Shannon sampling theorem \cite{Nyquist} by exploiting the compressibility or sparsity of sparse signals and nonlinear optimization techniques.  The compressed sensing problem is generally to solve a highly under-sampled or under-determined inverse problem
\begin{equation}\label{eq1}
y=Ax,
\end{equation}
given possibly linear measurements $y$, where $A\in\mathbb{C}^{M\times N}$ is the sampling matrix with $M\ll N$, and $x$ is the $N$-dimensional unknown signal with only $s\ll N $ nonzero coefficients. Solving linear inverse problems to find sparse solutions arises in a wide range of applications in signal and image processing \cite{JZhang}-\cite{ZYang}. Other applications are also found in low-rank tensor recovery  \cite{ECandes,Recht}, \cite{Liu,Gandy} through sparse approximation, and so on.

A naive way of tackling (\ref{eq1}) is to solve the combinatorial optimization problem for the sparsest solution by minimizing the $\ell_{0}$ ``norm'', which is known to be NP-hard, and computationally intractable£¬
\begin{equation}\label{P0}
\min_x \|x\|_0, \ \ \text{subj. to}\ \ y=Ax.
\end{equation}

Various computationally efficient algorithms for solving (\ref{eq1}) have been extensively studied. A large majority of algorithms are based on two strategies: convex or non-convex relaxations and greedy iterative algorithms.
The renowned advance of relaxations is to replace the optimization problem with the $\ell_{0}$ norm by the $\ell_{p}$ norms $(0<p\leq 1)$, namely,
\begin{equation}\label{Pp}
\min_x \|x\|_p, \ \  \text{subj. to}\ \ y=Ax, \ \ 0<p\leq 1.
\end{equation}
A symbolic work, known as basis pursuit (BP) \cite{SSChen}, is a typical convex relaxation by finding a solution of (\ref{Pp}) through $\ell_{1}$ minimization ($p=1$). Readers are referred to a series of articles focused on the theoretical analysis of the $\ell_{1}$ minimization approach. e.g., \cite{EJCand}-\cite{ZhangR}.

Note that $\ell_{1}$ norm may not accurately approximate $\ell_{0}$ norm, many scholars (see \cite{Chartrand}-\cite{LZhang}) have tried to provide solutions through further relaxations with non-convex $\ell_{p}$ norm. $0<p<1$, which, in some cases, do approximate $\ell_{0}$ norm better. In addition, weighted $\ell_{1}$ minimizations (see \cite{Ming}-\cite{YBZhao}) are another class of techniques designed to penalize more dramatically near-zero coefficients.  It is seen that weighted $\ell_{1}$ techniques do enhance the sparsity selection capacity and improve the signal recovery performance.

Similarly, an iterative re-weighted least square (IRLS) algorithm \cite{Daubechies} using a weighted $\ell_{2}$ minimization approach is another (nearly hidden) highlight among iterative algorithms.  It is noteworthy to emphasize that the IRLS is much less computationally demanding than that of the weighted $\ell_{1}$ techniques, which can be much more effective in large system applications.

There has also been a development of tail $\ell_{1}$ algorithms and analytical analysis \cite{LaiCK} as well. Theoretical analysis and experiment testing have confirmed the effectiveness of the tail $\ell_{1}$ algorithm, especially for sparse signals with large and near spark-level sparsity.  One of the notable results in \cite{LaiCK} is a measure theoretical uniqueness for the sparsest solution of (\ref{eq1}) when the sparsity $s$ satisfying $m/2 < s < m$, where $m$ is the spark of $A$. Here, the {\em spark} of $A$ stands for the least number of columns of $A$ that are linearly dependent.

Note that the uniquely sparsest solution of (\ref{P0}) in linear algebra only exists for $s < m/2$. The measure theoretical uniqueness theorem states that, when $m/2 < s < m$, the $\ell_0$ problem (\ref{P0}) still has unique solution in all $s$-sparse coordination spaces up to a measure of 0.  And, it is seen in \cite{LaiCK} that the tail $\ell_{1}$ algorithm has the capacity to recover sparse signals at the spark-level-sparsity $m/2 < s < m$ with probability/measure 1, whereas traditional $\ell_1$ based BP techniques is shown to must-fail for $m/2 < s < m$  \cite{LaiCK}, at least for real matrices $A$.

Greedy algorithms are another class of popular approaches, which find sparse solutions with, sometimes, considerably low computational complexity. A representative algorithm is the orthogonal matching pursuit (OMP) \cite{TroppJA,hongk}. At each iteration, the main principle of OMP is to exploit a column of sensing matrix $A$ maximally correlated with the residual, and add its corresponding index into the current support set. A new solution is then obtained by an orthogonal projection of the measurements onto the indexed column subspace of $A$ and finally to compute a new residual for the next iteration \cite{TroppJA,hongk}. Algorithms in the same category including regularized OMP (ROMP) \cite{romp}, stagewise OMP (StOMP) \cite{stomp}, subspace pursuit (SP) \cite{sp}, and compressive sampling matching pursuit (CoSaMP) \cite{cosamp}. An extension of OMP, named GOMP \cite{gOMP}, selects multiple correct indices per iteration so that the algorithm can terminate with much smaller number of iterations.

The simplest greedy approach is the iterative hard thresholding (IHT) \cite{IHTF,IHTA}, which keeps the largest several entries (in magnitude) of a vector and sets others to zeros. As important variations, normalized IHT (NIHT) \cite{NIHT} is proposed by considering an optimal step size. There is also an accelerated IHT (AIHT) \cite{AIHT}.  With the iteration $\widetilde{x}^{k+1}=H_{s}(x^{k}+\mu A^{\ast}(y-Ax^{k}))$, where $H_{s}$ is the hard thresholding operator that sets all but the largest (in magnitude) $s$ elements of a vector to zero, instead of continuing the iterative process with $\widetilde{x}^{k}$, AIHT finds $x^{k+1}$ that satisfies two conditions: $x^{k+1}$ is $s$-sparse and $x^{k+1}$ satisfies $\|y-Ax^{k+1}\|_{2}\leq\|y-A\widetilde{x}^{k+1}\|_{2}$. These variations lead to improved recovery capability and faster convergence speed than that of the traditional IHT. In addition, a conjugate gradient iterative hard thresholding (CGIHT) algorithm \cite{JBlanchard} is also seen to balance the low complexity per iteration of hard thresholding algorithms with the fast asymptotic convergence rate by using the conjugate gradient strategy. The theoretical analysis of CGIHT is provided in \cite{CGIHT}.

Despite the favorable results, algorithms based on IHT possess a drawback in that a prior knowledge of the sparsity level $s$ is generally required, and that the algorithms function well only for relatively smaller $s$.  Very recently, Foucart et al. propose a hard thresholding pursuit algorithm (HTP) \cite{Foucartone} and a graded hard thresholding pursuit algorithm (GHTP) \cite{Bouchot}, which can be regarded as a hybrid of IHT and CoSaMP. In \cite{GHTP}, some generalizations of HTP improve the speed performance by optimizing the size of support per iteration.

By an energy ``relocation'' strategy,  an iterative null space tuning algorithm with hard thresholding and feedbacks (NST+HT+FB) \cite{li2014fast} is proposed to find sparse solutions, aiming at faster convergence rate and greater recovery capacity. In fact, it is shown that NST+HT+FB converges in finite many steps \cite{li2014fast}.  The feedback mechanism in NST+HT+FB is to feed/relocate the ``tail'' contribution $A_{T_{k}^{c}}x_{T_{k}^{c}}^{k}$ of the measurement $y$ back to the thresholding support $T_k$:
\begin{equation} \label{eq8}
\text{(NST+HT+FB)}\ \ \ \
\left\{ \begin{aligned}
         \begin{aligned}
         &u_{T_{k}}^{k}=x_{T_{k}}^{k}+(A_{T_{k}}^{\ast}A_{T_{k}})^{-1}A_{T_{k}}^{\ast}A_{T_{k}^{c}}x_{T_{k}^{c}}^{k}, \\
         &u_{T_{k}^{c}}^{k}=0, \\
         &x^{k+1}=x^{k}+\mathbb{P}(u^{k}-x^{k}).\\
         \end{aligned}
         \end{aligned} \right.
\end{equation}
where $\mathbb{P}=I-A^{\ast}(AA^{\ast})^{-1}A$ is the orthogonal projection onto the $\ker(A)$, and can be computed off-line. If $x^{0}$ is feasible, then the iterative sequence $\{x^{k}\}$ is always feasible. Here the notations such as $A_T$ and $x_T$ stand for, respectively, columns of $A$ indexed by $T$, and components of $x$ indexed by $T$, following the convention in compressed sensing literatures and the matrix product rules.

Due to the feasibility of the sequence $\{x^{k}\}$, the NST step $x^{k+1}=x^{k}+\mathbb{P}(u^{k}-x^{k})$ can be rewritten as $x^{k+1}=u^{k}+A^{\ast}(AA^{\ast})^{-1}(y-Au^{k})$.
\begin{center}
     \begin{tabular}{lp{100mm}}
      \hline
      \multicolumn{2}{l}{{\bf Algorithm}~$1$~~AdptNST+HT+$f$-FB}\\ %\hline
      \hline
      &{\bf Input:} $A$, $y$, $\epsilon$, $f(\cdot)$, $K$;\\
      &{\bf Output:} $u$;\\
      &{\bf Initialize:} $k=0$, $u^{0}=0$;\\
      &{\bf While}$\|y-Au^{k}\|_{2}>\epsilon$ \text{and} $k<K$ {\bf do}\\
      &~~~~~~~~~~$x^{k+1}=x^{k}+\mathbb{P}(u^{k}-x^{k})$; \\
      &~~~~~~~~~~$u_{T_{k}}^{k}=x_{T_{k}}^{k}+(A_{T_{k}}^{\ast}A_{T_{k}})^{-1}A_{T_{k}}^{\ast}A_{T^{c}_{k}}x_{T^{c}_{k}}^{k}$, $|T_{k}|=f(k)$.\\
      &~~~~~~~~~~$u_{T^{c}_{k}}^{k}=0$;\\
      &~~~~~~~~~~$k=k+1$;\\
      &{\bf end while};\\
     \hline
     \end{tabular}
\end{center}
Note that $|T_{k}|=s$, NST+HT+FB produces a sequence $\{u^{k}\}$ of $s$-sparse signals. If preconditioned restricted isometry constant \cite{li2014fast} and the restricted isometry constant of $A$ satisfies \cite{li2014fast}
 \begin{equation}\label{convercondition}
\delta_{2s}+\sqrt{2}\gamma_{3s}<1,
\end{equation}
then the sequence of $\{u^{k}\}$ generated by NST+HT+FB converges to the real solution $x$ \cite{li2014fast} rather rapidly.  As mentioned, it is shown in \cite{li2014fast} that NST+HT+FB converges in finite many steps.
Like other iterative thresholding algorithms, NST+HT+FB at its original form also assumes a prior knowledge of the sparsity $s$ and the convergence condition (\ref{convercondition}) is still stronger than needed.

Motivated by topics about selection of indices per iteration in greedy algorithms, we introduce a generalization of NST+HT+FB, referred to as ApdtNST+HT+$f$-FB, in terms of adaptive but fixed functional index set selections per iteration. Specifically, the cardinality of indices selected per iteration $|T(k)| = f(k)$ at the $k^{th}$ iteration. The AdptNST+HT+$f$-FB algorithm can be established as follows:
 \begin{equation} \label{eq10}
(\text{AdptNST+HT+$f$-FB})\ \ \ \
\left\{ \begin{aligned}
         \begin{aligned}
         &\mu_{T_{k}}^{k}=x_{T_{k}}^{k}+(A_{T_{k}}^{\ast}A_{T_{k}})^{-1}A_{T_{k}}^{\ast}A_{T_{k}^{c}}x_{T_{k}^{c}}^{k}, \\
         &\mu_{T_{k}^{c}}^{k}=0, \\
         &x^{k+1}=x^{k}+\mathbb{P}(u^{k}-x^{k}),\\
         \end{aligned}
         \end{aligned} \right.
\end{equation}
where $|T_{k}|=f(k)$. Note that the constant function $f(k)=s$ corresponds to NST+HT+FB (\ref{eq8}).
Since $|T_{k}|=f(k)$, AdptNST+HT+$f$-FB constructs a sequence $\{u^{k}\}$ of $f(k)$-sparse signals per iteration.  Algorithm $1$ is the pseudo-code of AdptNST+HT+$f$-FB, where $K$ is the maximum number of iterations.

The main contribution of this article is to provide the detailed convergence analysis and proofs for the proposed class of algorithms combining hard thresholding, $f$-feedbacks and null space tuning. Note that the significant departure of this paper from the previous literature \cite{li2014fast} includes the followings. For one, the number of indices selected per iteration is $f(k)$ without requiring a prior knowledge or estimation of the sparsity level $s$.  This choice of $|T(k)|=f(k)$ is shown to improve the adaptivity and the speed of convergence.  For two, the general convergence theory is obtained for AdptNST+HT+$f$-FB.  Since AdptNST+HT+$f$-FB is reduced to NST+HT+FB by setting $f(k)=s$, the new convergence condition for AdptNST+HT+$f$-FB also improves that of NST+HT+FB in \cite{li2014fast}, as will be discussed in {\em Section \ref{secConvergence}}.

The AdptNST+HT+$f$-FB algorithm is compared with other relevant algorithms previously mentioned empirically through extension numerical tests.  These numerical experiments show that the AdptNST+HT+$f$-FB algorithm is among the most advanced and effective recovery algorithms.  It is also seen that AdptNST+HT+$f$-FB algorithm has a clearly advantageous balance of efficiency, adaptivity and accuracy compared with all other state-of-the-art greedy iterative algorithms.

For clarity, notations are used as follows in this article. $S$ is the true support of $s$-sparse vector $x$. $x_{T}$ is the restriction of a vector $x$ to an index set $T$. We denote by $T^{c}$ the complement set of $T$ in $\{1,2,\ldots,N\}$, and by $A_{T}$ the sub-matrix consisting of columns of $A$ indexed by $T$, respectively. $T\triangle T'$ is the symmetric difference of $T$ and $T'$, i.e., $T\triangle T' =(T\setminus T')\cup(T'\setminus T)$ and $|T|$ is the cardinality of set $T$.

\section{Main results}\label{secConvergence}
\subsection{Properties characterized via RIP (P-RIP)}
\begin{definition} \label{de1}
\cite{Can} For each integer $s = 1, 2,... ,$ the restricted isometry constant $\delta_{s}$ of a matrix $A$ is defined as the smallest number $\delta_{s}$ such that
\begin{equation*}\label{eq10}
(1-\delta_{s})\|x\|_{2}^{2}\leq\|Ax\|_{2}^{2}\leq(1+\delta_{s})\|x\|_{2}^{2},
\end{equation*}
holds for all $s$-sparse vectors $x$. Equivalently, it can be given by \cite{FoucatsBook}
\begin{equation*}\label{eq13}
\delta_{s}=\max \limits_{S\subset[N],|S|\leq s}\|A_{S}^{\ast}A_{S}-I\|_{2}.
\end{equation*}
\end{definition}

\begin{definition} \label{de2}
 \cite{li2014fast}. For each integer $s = 1, 2,... ,$ the preconditioned restricted isometry constant $\gamma_{s}$ of a matrix $A$ is defined as the smallest number $\gamma_{s}$ such that
\begin{equation*}\label{eq11}
(1-\gamma_{s})\|x\|_{2}^{2}\leq\|(AA^{\ast})^{-\frac{1}{2}}Ax\|_{2}^{2},
\end{equation*}
holds for all $s$-sparse vectors $x$.
In fact, the preconditioned restricted isometry constant $\gamma_{s}$ characterizes the restricted isometry property of the preconditioned matrix $(AA^{\ast})^{-\frac{1}{2}}A$. Since
\begin{equation*}
\|(AA^{\ast})^{-\frac{1}{2}}Ax\|_{2}\leq\|(AA^{\ast})^{-\frac{1}{2}}A\|_{2}\|x\|_{2}=\|x\|_{2},
\end{equation*}
$\gamma_{s}$ is actually the smallest number such that, for all $s$-sparse vectors $x$,
\begin{equation*}\label{eq12}
(1-\gamma_{s})\|x\|_{2}^{2}\leq\|(AA^{\ast})^{-\frac{1}{2}}Ax\|_{2}^{2}\leq(1+\gamma_{s})\|x\|_{2}^{2}.
\end{equation*}

Note that $\gamma_{s}(A)=\delta_{s}((AA^{\ast})^{-\frac{1}{2}}A)$. Evidently, for Parseval frames, since $AA^{\ast}=I$, $\gamma_{s}(A)=\delta_{s}(A)$.  Equivalently, $\gamma_{s}$ can also be given by
\begin{equation*}\label{eq14}
\gamma_{s}=\max \limits_{S\subset[N],|S|\leq s}\|A_{S}^{\ast}(AA^\ast)^{-1}A_{S}-I\|_{2}.
\end{equation*}
\end{definition}

\begin{lemma} \label{ert}
For $u,v\in\mathbb{C}^{N}$, if $|\text{supp} (u) \cup \text{supp} (v)|\leq t$, then $|\langle u,(I-A^{\ast}A)v\rangle|\leq\delta_{t}\|u\|_{2}\|v\|_{2}$. Suppose $|R \cup \text{supp} (v)|\leq t$, then $\|((I-A^{\ast}A)v)_{R}\|_{2}\leq\delta_{t}\|v\|_{2}$.
\end{lemma}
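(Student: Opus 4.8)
The plan is to reduce both statements to the equivalent characterization of the restricted isometry constant given in Definition \ref{de1}, namely $\delta_{t}=\max_{S\subset[N],|S|\leq t}\|A_{S}^{\ast}A_{S}-I\|_{2}$. The guiding observation is that in each inner product only the coordinates in a single small index set are active, so the ambient operator $I-A^{\ast}A$ may be replaced by the restricted operator $I-A_{T}^{\ast}A_{T}$ on that set, after which everything follows from Cauchy--Schwarz and the operator-norm bound.

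For the first statement, I would set $T=\text{supp}(u)\cup\text{supp}(v)$, so that $|T|\leq t$ and both $u=u_{T}$ and $v=v_{T}$ are supported on $T$. Since $u$ vanishes off $T$, I can replace $(I-A^{\ast}A)v$ by its restriction to $T$, and since $v$ vanishes off $T$ one has $\big((I-A^{\ast}A)v\big)_{T}=v_{T}-A_{T}^{\ast}A_{T}v_{T}$. This rewrites the pairing as
\begin{equation*}
\langle u,(I-A^{\ast}A)v\rangle=\langle u_{T},(I-A_{T}^{\ast}A_{T})v_{T}\rangle .
\end{equation*}
Applying Cauchy--Schwarz and then $\|I-A_{T}^{\ast}A_{T}\|_{2}=\|A_{T}^{\ast}A_{T}-I\|_{2}\leq\delta_{t}$ (valid because $|T|\leq t$) yields $|\langle u,(I-A^{\ast}A)v\rangle|\leq\delta_{t}\|u_{T}\|_{2}\|v_{T}\|_{2}=\delta_{t}\|u\|_{2}\|v\|_{2}$, as claimed.

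For the second statement I would bootstrap from the first by a good choice of test vector. Set $w=\big((I-A^{\ast}A)v\big)_{R}$, which is supported on $R$, so that $|\text{supp}(w)\cup\text{supp}(v)|\leq|R\cup\text{supp}(v)|\leq t$. Because $w$ lives on $R$, the squared norm can be written as a full inner product,
\begin{equation*}
\|w\|_{2}^{2}=\langle w,\big((I-A^{\ast}A)v\big)_{R}\rangle=\langle w,(I-A^{\ast}A)v\rangle,
\end{equation*}
and applying the first part with $u=w$ gives $\|w\|_{2}^{2}\leq\delta_{t}\|w\|_{2}\|v\|_{2}$. Cancelling one factor of $\|w\|_{2}$ (the bound is trivial when $w=0$) delivers $\|((I-A^{\ast}A)v)_{R}\|_{2}\leq\delta_{t}\|v\|_{2}$. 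The argument is essentially bookkeeping; the only genuinely nonroutine steps are recognizing that supports may be merged so as to pass to the restricted operator in the first part, and the self-referential trick of writing $\|w\|_{2}^{2}$ as an inner product in order to invoke the first part in the second. Neither presents a real obstacle, so I expect the proof to be short.
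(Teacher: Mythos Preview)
Your proof is correct and follows essentially the same route as the paper: set $T=\operatorname{supp}(u)\cup\operatorname{supp}(v)$, reduce the pairing to $\langle u_{T},(I-A_{T}^{\ast}A_{T})v_{T}\rangle$, and bound via Cauchy--Schwarz and $\|I-A_{T}^{\ast}A_{T}\|_{2}\leq\delta_{t}$; then obtain the second estimate by writing $\|((I-A^{\ast}A)v)_{R}\|_{2}^{2}$ as an inner product with $(I-A^{\ast}A)v$ and invoking the first part. The paper's argument is identical up to cosmetic differences in the intermediate display.
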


\begin{proof}
Indeed, setting $T=\text{supp} (v)\cup  \text{supp} (u)$, one has
\begin{equation}\label{ningyang}
\begin{aligned}
|\langle u,(I-A^{\ast}A)v\rangle| &=|\langle u_{T},v_{T}\rangle-\langle A_{T}u_{T},A_{T}v_{T}\rangle|\\
&=|\langle u_{T},(I-A_{T}^{\ast}A_{T})v_{T}\rangle|\\
&\leq\|u_{T}\|_{2}\|I-A_{T}^{\ast}A_{T}\|_{2}\|v_{T}\|_{2}\\
&\leq\delta_{t}\|u\|_{2}\|v\|_{2}.\\
\end{aligned}
\end{equation}
Using (\ref{ningyang}), we have
\begin{equation*}
\begin{aligned}
\|((I-A^{\ast}A)v)_{R}\|_{2}^{2} &=\langle((I-A^{\ast}A)v)_{R},(I-A^{\ast}A)v\rangle\\
&\leq\delta_{t}\|((I-A^{\ast}A)v)_{R}\|_{2}\|v\|_{2},
\end{aligned}
\end{equation*}
and it remains to simplify by solving $\|((I-A^{\ast}A)v)_{R}\|_{2}$ to obtain $\|((I-A^{\ast}A)v)_{R}\|_{2}\leq\delta_{t}\|v\|_{2}$.
\end{proof}

\begin{remark} \label{nin}
Let $\gamma_{t}$ be the P-RIP constant of $A$, i.e., $\gamma_{t}(A)=\delta_{t}((AA^{\ast})^{-\frac{1}{2}}A)$. For $u,v\in\mathbb{C}^{N}$, if $|\text{supp} (u) \cup \text{supp} (v)|\leq t$, then $|\langle u,(I-A^{\ast}(AA^{\ast})^{-1}A)v\rangle|\leq\gamma_{t}\|u\|_{2}\|v\|_{2}$.  Moreover, Suppose $|R \cup \text{supp} (v)|\leq t$, then $\|((I-A^{\ast}(AA^{\ast})^{-1}A)v)_{R}\|_{2}\leq\gamma_{t}\|v\|_{2}$.
\end{remark}

\begin{lemma}\label{dong}
For $\forall e\in\mathbb{C}^{M}$ and $|T|\leq t$, then $\|(A^{\ast}e)_{T}\|_{2}\leq\sqrt{1+\delta_{t}}\|e\|_{2}$.
\end{lemma}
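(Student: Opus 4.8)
The plan is to reduce the claim to a statement about the operator norm of the submatrix $A_T$ and then to invoke the equivalent characterization of the restricted isometry constant given in Definition \ref{de1}. The starting observation is that restricting $A^{\ast}e$ to the index set $T$ is exactly the action of the adjoint of the column submatrix, i.e. $(A^{\ast}e)_{T}=A_{T}^{\ast}e$. Hence the quantity to be bounded is $\|A_{T}^{\ast}e\|_{2}$, which submits immediately to the operator-norm inequality
\begin{equation*}
\|(A^{\ast}e)_{T}\|_{2}=\|A_{T}^{\ast}e\|_{2}\leq\|A_{T}^{\ast}\|_{2}\,\|e\|_{2}=\|A_{T}\|_{2}\,\|e\|_{2}.
\end{equation*}

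It then remains to control $\|A_{T}\|_{2}$ by $\sqrt{1+\delta_{t}}$. For this I would use the identity $\|A_{T}\|_{2}^{2}=\|A_{T}^{\ast}A_{T}\|_{2}$, valid for any matrix. Writing $A_{T}^{\ast}A_{T}=I+(A_{T}^{\ast}A_{T}-I)$ and applying the triangle inequality for the spectral norm gives $\|A_{T}^{\ast}A_{T}\|_{2}\leq 1+\|A_{T}^{\ast}A_{T}-I\|_{2}$. Since $|T|\leq t$, the equivalent form of the RIP constant from Definition \ref{de1}, namely $\delta_{t}=\max_{|S|\leq t}\|A_{S}^{\ast}A_{S}-I\|_{2}$, yields $\|A_{T}^{\ast}A_{T}-I\|_{2}\leq\delta_{t}$, so that $\|A_{T}\|_{2}^{2}\leq 1+\delta_{t}$. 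Combining this with the displayed operator-norm bound above produces $\|(A^{\ast}e)_{T}\|_{2}\leq\sqrt{1+\delta_{t}}\,\|e\|_{2}$, as claimed.

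An essentially equivalent route, should one prefer to avoid the identity $\|A_{T}\|_{2}^{2}=\|A_{T}^{\ast}A_{T}\|_{2}$, is a duality argument: for any $w$ supported on $T$ with $\|w\|_{2}\leq 1$ one has $\langle w,A^{\ast}e\rangle=\langle Aw,e\rangle\leq\|Aw\|_{2}\|e\|_{2}$, and since $w$ is $t$-sparse the upper RIP bound gives $\|Aw\|_{2}\leq\sqrt{1+\delta_{t}}$; taking the supremum over all such $w$ recovers $\|(A^{\ast}e)_{T}\|_{2}\leq\sqrt{1+\delta_{t}}\,\|e\|_{2}$. Honestly, there is no real obstacle here — the only point requiring a moment's care is the correct invocation of the \emph{equivalent} spectral-norm characterization of $\delta_{t}$ (rather than the original quadratic-form definition), together with the standard fact $\|A_{T}^{\ast}\|_{2}=\|A_{T}\|_{2}$; everything else is a direct chain of norm estimates.
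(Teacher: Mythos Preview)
Your proof is correct. In fact your second, duality-based route is exactly the paper's argument: the paper writes $\|(A^{\ast}e)_{T}\|_{2}^{2}=\langle e,A(A^{\ast}e)_{T}\rangle\leq\|e\|_{2}\sqrt{1+\delta_{t}}\|(A^{\ast}e)_{T}\|_{2}$ and simplifies, which is precisely your duality step applied with the specific test vector $w=(A^{\ast}e)_{T}/\|(A^{\ast}e)_{T}\|_{2}$. Your first route via $\|A_{T}\|_{2}^{2}=\|A_{T}^{\ast}A_{T}\|_{2}\leq 1+\delta_{t}$ is a harmless variant that invokes the spectral-norm characterization of $\delta_{t}$ from Definition~\ref{de1} instead of the quadratic-form bound; it buys a slightly cleaner one-line presentation at the cost of citing the equivalent form of the RIP constant, but there is no substantive difference.
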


\begin{proof}
\begin{equation*}
\begin{aligned}
\|(A^{\ast}e)_{T}\|_{2}^{2}&=\langle A^{\ast}e,(A^{\ast}e)_{T}\rangle\\
&=\langle e,A(A^{\ast}e)_{T}\rangle\\
&\leq\|e\|_{2}\|A(A^{\ast}e)_{T}\|_{2}\\
&\leq\|e\|_{2}\sqrt{1+\delta_{t}}\|(A^{\ast}e)_{T}\|_{2},\\
\end{aligned}
\end{equation*}
we have $\|(A^{\ast}e)_{T}\|_{2}\leq\sqrt{1+\delta_{t}}\|e\|_{2}$.
\end{proof}

\begin{remark}\label{guang}
 For $\forall$ $e\in\mathbb{C}^{M}$ and $|T|\leq t$, then $\|(A^{\ast}(AA^{\ast})^{-1}e)_{T}\|_{2}\leq\sqrt{1+\theta_{t}}\|e\|_{2}$, where $\theta_{t}(A)=\delta_{t}((AA^{\ast})^{-1}A)$.
\end{remark}

\begin{lemma}\label{h}
 Suppose that $y=Ax+e$, where $x\in\mathbb{C}^{N}$ is $s$-sparse with $S=$\text{supp}$(x)$ and $e\in\mathbb{C}^{M}$ is the measurement error. If $u'\in\mathbb{C}^{N}$ is $s'$-sparse and $T$ is an index set of $t\geq s$ largest absolute entries of $u'+A^{\ast}(AA^\ast)^{-1}(y-Au')$, then we have
\begin{equation*}
\|x_{T^{c}}\|_{2}\leq\sqrt{2}(\gamma_{s+s'+t}\|x-u'\|_{2}+\sqrt{1+\theta_{t+s}}\|e\|_{2}),%\|[A^{\ast}(AA^\ast)^{-1}e]_{T\triangle S}\|_{2})
\end{equation*}
where $\theta_{s}(A)=\delta_{s}((AA^\ast)^{-1}A)$.
\end{lemma}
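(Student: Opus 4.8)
The plan is to follow the template of hard-thresholding-pursuit estimates, transferred to the preconditioned setting. First I would set $w := u' + A^{\ast}(AA^{\ast})^{-1}(y-Au')$, so that $T$ is exactly the support of the $t$ largest (in modulus) entries of $w$. Since $x$ is supported on $S$ and $|S|=s\le t$, the vector $x_{T^{c}}$ is supported on $S\setminus T$, i.e. $\|x_{T^{c}}\|_{2}=\|x_{S\setminus T}\|_{2}$. The decisive first step is a counting/magnitude comparison: because $|T|=t\ge s=|S|$ we have $|S\setminus T|\le|T\setminus S|$, and every entry of $w$ indexed inside $T$ is at least as large in modulus as any entry indexed outside $T$; these two facts together give $\|w_{S\setminus T}\|_{2}\le\|w_{T\setminus S}\|_{2}$.

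Next I would estimate $\|x_{S\setminus T}\|_{2}$ by inserting $w$. By the triangle inequality and the comparison above,
\[
\|x_{S\setminus T}\|_{2}\le\|(x-w)_{S\setminus T}\|_{2}+\|w_{S\setminus T}\|_{2}\le\|(x-w)_{S\setminus T}\|_{2}+\|w_{T\setminus S}\|_{2}.
\]
Since $T\setminus S\subseteq S^{c}$ we have $x_{T\setminus S}=0$, hence $w_{T\setminus S}=(w-x)_{T\setminus S}$, giving $\|x_{S\setminus T}\|_{2}\le\|(x-w)_{S\setminus T}\|_{2}+\|(x-w)_{T\setminus S}\|_{2}$. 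The index sets $S\setminus T$ and $T\setminus S$ are disjoint, so applying $a+b\le\sqrt2\sqrt{a^{2}+b^{2}}$ collapses the right-hand side to $\sqrt2\,\|(x-w)_{R}\|_{2}$ with $R:=S\triangle T$.

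Finally I would expand $x-w$. From $y=Ax+e$ one gets $y-Au'=A(x-u')+e$, whence
\[
x-w=\bigl(I-A^{\ast}(AA^{\ast})^{-1}A\bigr)(x-u')-A^{\ast}(AA^{\ast})^{-1}e,
\]
and restricting to $R$ and using the triangle inequality splits the bound into a signal part and a noise part. For the signal part, $\mathrm{supp}(x-u')\cup R\subseteq S\cup T\cup\mathrm{supp}(u')$ has cardinality at most $s+s'+t$, so Remark~\ref{nin} yields $\|((I-A^{\ast}(AA^{\ast})^{-1}A)(x-u'))_{R}\|_{2}\le\gamma_{s+s'+t}\|x-u'\|_{2}$. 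For the noise part, $|R|\le|S|+|T|\le t+s$, so Remark~\ref{guang} yields $\|(A^{\ast}(AA^{\ast})^{-1}e)_{R}\|_{2}\le\sqrt{1+\theta_{t+s}}\|e\|_{2}$. Combining these two bounds with the $\sqrt2$ factor produces exactly the claimed inequality.

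The main obstacle is the comparison $\|w_{S\setminus T}\|_{2}\le\|w_{T\setminus S}\|_{2}$ in the first step: it is the only place where the thresholding (optimality of $T$) enters, and it rests on the two ingredients $|S\setminus T|\le|T\setminus S|$ (a consequence of $t\ge s$) and the domination of off-$T$ entries by on-$T$ entries. Once this is secured, the remainder is careful bookkeeping of support sizes so as to invoke Remark~\ref{nin} and Remark~\ref{guang} with the correct orders; the only subtlety there is to track $S\cup T\cup\mathrm{supp}(u')$ for the $\gamma$ constant and $S\cup T$ for the $\theta$ constant.
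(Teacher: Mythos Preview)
Your proof is correct and follows essentially the same route as the paper's: the paper obtains the key comparison $\|w_{S\setminus T}\|_{2}\le\|w_{T\setminus S}\|_{2}$ by starting from $\|w_{T}\|_{2}\ge\|w_{S}\|_{2}$ (optimality of $T$ together with $|S|\le t$) and cancelling the common $T\cap S$ part, which is equivalent to your counting-plus-domination argument, and then proceeds exactly as you do—rewriting $x-w=(I-A^{\ast}(AA^{\ast})^{-1}A)(x-u')-A^{\ast}(AA^{\ast})^{-1}e$, restricting to $S\triangle T$, and invoking Remarks~\ref{nin} and~\ref{guang} with the same support-size bookkeeping.
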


\begin{proof}
It is well known that

$\|[u'+A^{\ast}(AA^\ast)^{-1}(y-Au')]_{T}\|_{2}\geq\|[u'+A^{\ast}(AA^\ast)^{-1}(y-Au')]_{S}\|_{2}.$
%% End of file `elsarticle-template-1-num.tex'.

Eliminating the common terms over $T\bigcap S$, one has

$\|[u'+A^{\ast}(AA^\ast)^{-1}(y-Au')]_{T\setminus S}\|_{2}\geq\|[u'+A^{\ast}(AA^\ast)^{-1}(y-Au')]_{S\setminus T}\|_{2}.$

For the left hand side,
\begin{equation*}
\begin{aligned}
\|[u'+A^{\ast}(AA^\ast)^{-1}(y-Au')]_{T\setminus S}\|_{2}&=\|[u'-x+A^{\ast}(AA^\ast)^{-1}(Ax+e-Au')]_{T\setminus S}\|_{2}\\
&=\|[(I-A^{\ast}(AA^\ast)^{-1}A)(u'-x)+A^{\ast}(AA^\ast)^{-1}e]_{T\setminus S}\|_{2}.\\
\end{aligned}
\end{equation*}

The right hand side satisfies
\begin{equation*}
\begin{aligned}
\|[u'+A^{\ast}(AA^\ast)^{-1}(y-Au')]_{S\setminus T}\|_{2}&=\|[u'+A^{\ast}(AA^\ast)^{-1}(Ax+e-Au')+x-x]_{S\setminus T}\|_{2}\\
&\geq\|x_{S\setminus T}\|_{2}-\|[(I-A^{\ast}(AA^\ast)^{-1}A)(u'-x)+A^{\ast}(AA^\ast)^{-1}e]_{S\setminus T}\|_{2}.\\
\end{aligned}
\end{equation*}
Consequently,
\begin{equation*}
\begin{aligned}
&\|x_{S\setminus T}\|_{2}\\
&\leq \|[(I-A^{\ast}(AA^\ast)^{-1}A)(u'-x)+A^{\ast}(AA^\ast)^{-1}e]_{S\setminus T}\|_{2}\\
&+\|[(I-A^{\ast}(AA^\ast)^{-1}A)(u'-x)+A^{\ast}(AA^\ast)^{-1}e]_{T\setminus S}\|_{2}\\
&\leq\sqrt{2}\|[(I-A^{\ast}(AA^\ast)^{-1}A)(u'-x)+A^{\ast}(AA^\ast)^{-1}e]_{T\triangle S}\|_{2}\\
&\leq\sqrt{2}\|[(I-A^{\ast}(AA^\ast)^{-1}A)(u'-x)]_{T\triangle S}\|_{2}+\sqrt{2}\|[A^{\ast}(AA^\ast)^{-1}e]_{T\triangle S}\|_{2}\\
&\leq\sqrt{2}(\gamma_{s+s'+t}\|x-u'\|_{2}+\sqrt{1+\theta_{t+s}}\|e\|_{2}).
\end{aligned}
\end{equation*}
Here in the last step {\em Remarks \ref{nin}} and {\em \ref{guang}} are applied.
\end{proof}

\begin{lemma}\label{lemm6}
 Suppose that $y=Ax+e$, where $x\in\mathbb{C}^{N}$ is $s$-sparse with $S=$\text{supp}$(x)$ and $e\in\mathbb{C}^{M}$ is the measurement error. Let $T=$\text{supp}$(x')$ and $|T|=t$. If $u'$ is the feedback of $x'$ that subjects to $u_{T}'=x_{T}'+(A_{T}^{\ast}A_{T})^{-1}A_{T}^{\ast}A_{T^{c}}x_{T^{c}}'$ and $u_{T^{c}}'=0$, then
\begin{equation*}
\|(x-u')\|_{2}\leq\frac{\sqrt{1+\delta_{t}}\|e\|_{2}}{1-\delta_{s+t}}+\frac{\|x_{T^{c}}\|_{2}}{\sqrt{1-\delta_{s+t}^{2}}}.
\end{equation*}
\end{lemma}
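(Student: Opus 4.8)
The plan is to use the feasibility of $x'$ (which holds for every iterate produced by the null space tuning step, and is precisely what ties $u'$ to $y$ and hence to $e$) to recast the feedback as an orthogonal least-squares fit on $T$, and then to split the error $x-u'$ into a tail part and a noise part that are estimated separately.

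First I would note that $u'_{T^c}=0$ gives the orthogonal splitting $\|x-u'\|_2^2=\|(x-u')_T\|_2^2+\|x_{T^c}\|_2^2$. Since $Ax'=y$, one has $A_{T^c}x'_{T^c}=y-A_Tx'_T$, so the feedback collapses to $u'_T=(A_T^\ast A_T)^{-1}A_T^\ast y$. Substituting $y=A_Tx_T+A_{T^c}x_{T^c}+e$ then yields the identity $(x-u')_T=-(A_T^\ast A_T)^{-1}A_T^\ast A_{T^c}x_{T^c}-(A_T^\ast A_T)^{-1}A_T^\ast e$. Writing $x-u'=d^{(1)}+d^{(2)}$, where $d^{(1)}$ equals $w^{(1)}:=-(A_T^\ast A_T)^{-1}A_T^\ast A_{T^c}x_{T^c}$ on $T$ and $x_{T^c}$ on $T^c$, while $d^{(2)}$ equals $-(A_T^\ast A_T)^{-1}A_T^\ast e$ on $T$ and $0$ elsewhere, the triangle inequality reduces the problem to bounding $\|d^{(1)}\|_2$ and $\|d^{(2)}\|_2$.

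The noise part is routine. Using $\|(A_T^\ast A_T)^{-1}\|_2\le (1-\delta_t)^{-1}$ together with Lemma \ref{dong}, I get $\|d^{(2)}\|_2\le (1-\delta_t)^{-1}\sqrt{1+\delta_t}\,\|e\|_2$, and since $\delta_t\le\delta_{s+t}$ this is at most $\frac{\sqrt{1+\delta_t}}{1-\delta_{s+t}}\|e\|_2$, the first term of the claim.

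The crux is the tail part, where the sharp constant $1/\sqrt{1-\delta_{s+t}^2}$ has to appear. Since $\|d^{(1)}\|_2^2=\|w^{(1)}\|_2^2+\|x_{T^c}\|_2^2$, it suffices to establish $\|w^{(1)}\|_2\le \frac{\delta_{s+t}}{\sqrt{1-\delta_{s+t}^2}}\|x_{T^c}\|_2$, after which $\|d^{(1)}\|_2^2\le(\frac{\delta_{s+t}^2}{1-\delta_{s+t}^2}+1)\|x_{T^c}\|_2^2=\frac{1}{1-\delta_{s+t}^2}\|x_{T^c}\|_2^2$. I expect the naive estimate via Lemma \ref{ert} and $\|(A_T^\ast A_T)^{-1}\|_2\le(1-\delta_t)^{-1}$ to be insufficient, since it only gives the factor $\delta_{s+t}/(1-\delta_t)$ and hence the weaker constant $1/\sqrt{1-\delta_{s+t}}$ after combining. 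To obtain the sharp bound I would instead argue with the Gram matrix $G=A_J^\ast A_J$ on the combined support $J=T\cup(S\setminus T)$, which has $|J|\le s+t$ and therefore $\|(G-I)z\|_2\le\delta_{s+t}\|z\|_2$ for all $z$ supported on $J$ by Definition \ref{de1}. Taking $u=-w^{(1)}=(A_T^\ast A_T)^{-1}A_T^\ast A_{T^c}x_{T^c}$ and the test vector $z$ equal to $u$ on $T$ and $-x_{T^c}$ on $S\setminus T$, the defining relation $A_T^\ast A_Tu=A_T^\ast A_{T^c}x_{T^c}$ forces the $T$-block of $Gz$ to vanish, so the $T$-block of $(G-I)z$ is exactly $-u$. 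Consequently $\|u\|_2^2\le\|(G-I)z\|_2^2\le\delta_{s+t}^2(\|u\|_2^2+\|x_{T^c}\|_2^2)$, which rearranges to $\|w^{(1)}\|_2=\|u\|_2\le\frac{\delta_{s+t}}{\sqrt{1-\delta_{s+t}^2}}\|x_{T^c}\|_2$. Feeding the two bounds into the triangle inequality then produces the stated estimate; the one genuinely delicate point is this test-vector argument, which extracts the improved constant by exploiting the built-in orthogonality of the least-squares residual rather than bounding the two factors of $w^{(1)}$ independently.
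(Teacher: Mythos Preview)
Your argument is correct, but it is organized quite differently from the paper's. The paper does \emph{not} split $x-u'$ into a tail piece and a noise piece. Instead it first records the normal-equation orthogonality $\langle Au'-y,Az\rangle=0$ for all $z$ supported on $T$ (equivalent to your observation $u'_T=(A_T^\ast A_T)^{-1}A_T^\ast y$), takes $z=(u'-x)_T$, and rewrites
\[
\|(x-u')_T\|_2^2=\langle x-u',(I-A^\ast A)(x-u')_T\rangle+\langle e,A(u'-x)_T\rangle
\]
to obtain, via Lemma~\ref{ert}, the single coupled inequality $\|(x-u')_T\|_2\le \delta_{s+t}\|x-u'\|_2+\sqrt{1+\delta_t}\,\|e\|_2$. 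Combining this with $\|x-u'\|_2^2=\|(x-u')_T\|_2^2+\|x_{T^c}\|_2^2$ yields a quadratic in $\|x-u'\|_2$ whose solution is the stated bound. Your route, by contrast, decouples the two contributions up front: the noise term is handled by the operator bound $\|(A_T^\ast A_T)^{-1}\|_2\le(1-\delta_t)^{-1}$ together with Lemma~\ref{dong}, and the tail term by your Gram-matrix test-vector trick on $J=T\cup(S\setminus T)$, which is an elegant standalone way to extract the sharp factor $\delta_{s+t}/\sqrt{1-\delta_{s+t}^2}$. The paper's method is a bit more economical (no explicit inverse-norm estimate, no additive decomposition), while yours is more modular and in fact produces a marginally tighter noise constant $\sqrt{1+\delta_t}/(1-\delta_t)$ before you relax $\delta_t$ to $\delta_{s+t}$ to match the statement.
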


\begin{proof}
For any $z\in\mathbb{C}^{N}$ supported on $T$,
\begin{equation*}
\begin{aligned}
\langle Au'-y,Az\rangle&=\langle A_{T}x_{T}'+A_{T}( A_{T}^{\ast} A_{T})^{-1}A_{T}^{\ast} A_{T^{c}}x_{T^{c}}'-y,A_{T}z_{T}\rangle\\
&=\langle A_{T}^{\ast}(A_{T}x_{T}'+A_{T^{c}}x_{T^{c}}'-y),z_{T}\rangle\\
&=\langle A_{T}^{\ast}(Ax'-y),z_{T}\rangle\\
&=0.\\
\end{aligned}
\end{equation*}
The last step is due to the feasibility of $x'$, i.e., $y=Ax'$.
The inner product can also be written as

$\langle Au'-y,Az\rangle=\langle (Au'-Ax-e),Az\rangle=0$.

\noindent We then have,

$\langle (u'-x),A^{\ast}Az\rangle=\langle e,Az\rangle,~~\forall z\in\mathbb{C}^{N}$ supported on $T$.

\noindent Since $(u'-x)_{T}$ is supported on $T$, one has

$\langle (u'-x),A^{\ast}A(u'-x)_{T}\rangle=\langle e,A(u'-x)_{T}\rangle.$

\noindent Consequently,
\begin{equation*}
\begin{aligned}
\|(u'-x)_{T}\|_{2}^{2}&=\langle(u'-x),(u'-x)_{T}\rangle\\
&=|\langle (x-u'),(I-A^{\ast}A)(x-u')_{T}\rangle+\langle e,A(u'-x)_{T}\rangle|\\
&\leq\delta_{s+t}\|x-u'\|_{2}\|(x-u')_{T}\|_{2}+\sqrt{1+\delta_{t}}\|e\|_{2}\|(x-u')_{T}\|_{2}\\
\end{aligned}
\end{equation*}
Using  {\em Lemma \ref{ert}}, Cauchy-Schwarz inequality and {\em Definition \ref{de1}} can obtain the last inequality.
\noindent Now, we have

$\|(x-u')_{T}\|_{2}\leq\delta_{s+t}\|x-u'\|_{2}+\sqrt{1+\delta_{t}}\|e\|_{2}$.

\noindent It then follows that

\begin{equation*}
\begin{aligned}
\|(x-u')\|_{2}^{2}&=\|(x-u')_{T}\|_{2}^{2}+\|(x-u')_{T^{c}}\|_{2}^{2}\\
&\leq(\delta_{s+t}\|x-u'\|_{2}+\sqrt{1+\delta_{t}}\|e\|_{2})^{2}+\|x_{T^{c}}\|_{2}^{2}.\\
\end{aligned}
\end{equation*}

\noindent In other words,
\begin{equation*}
\begin{aligned}
(\sqrt{1-\delta_{s+t}^{2}}\|(x-u')\|_{2}-\frac{\delta_{s+t}\sqrt{1+\delta_{t}}}{\sqrt{1-\delta_{s+t}^{2}}}\|e\|_{2})^{2}
\leq\frac{1+\delta_{t}}{1-\delta_{s+t}^{2}}\|e\|_{2}^{2}+\|x_{T^{c}}\|_{2}^{2}.
\end{aligned}
\end{equation*}
\noindent It means that
\begin{equation*}
\begin{aligned}
\|(x-u')\|_{2}&\leq\frac{\delta_{s+t}\sqrt{1+\delta_{t}}\|e\|_{2}+\sqrt{(1+\delta_{t})\|e\|_{2}^{2}+({1-\delta_{s+t}^{2})\|x_{T^{c}}\|_{2}^{2}}}}{1-\delta_{s+t}^{2}}\\
&\leq\frac{\|x_{T^{c}}\|_{2}}{\sqrt{1-\delta_{s+t}^{2}}}+\frac{\sqrt{1+\delta_{t}}\|e\|_{2}}{1-\delta_{s+t}}.\\
\end{aligned}
\end{equation*}
\end{proof}

\begin{theorem}\label{ha}
Suppose $y=Ax+e$ with $s$-sparse signal $x$ and measurement error $e$. If the RIP and P-RIP constants of $A$ satisfy $2\gamma_{s+f(k-1)+f(k)}^{2}+\delta_{s+f(k)}^{2}<1$, then $\{u^{k}\}$ in AdptNST+HT+$f$-FB satisfies
\begin{equation*}\label{hhh}
\begin{aligned}
\|(x-u^{k})\|_{2}&\leq\sqrt{\frac{2\gamma_{s+f(k-1)+f(k)}^{2}}{(1-\delta_{s+f(k)}^{2})}}\|x-u^{k-1}\|_{2}+\left(\frac{\sqrt{1+\delta_{f(k)}}}{1-\delta_{s+f(k)}}+
\frac{\sqrt{2(1+\theta_{s+f(k)})}}{\sqrt{1-\delta_{s+f(k)}^{2}}}\right)|e\|_{2},~k\geq s.\\
\end{aligned}
\end{equation*}
\end{theorem}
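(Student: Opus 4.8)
The plan is to obtain the stated recursion as a direct composition of \emph{Lemma \ref{h}} and \emph{Lemma \ref{lemm6}}, reading off the correct parameters from a single pass of the algorithm. The crucial structural observation is that the rewritten null-space-tuning step $x^{k}=u^{k-1}+A^{\ast}(AA^{\ast})^{-1}(y-Au^{k-1})$ yields a \emph{feasible} iterate ($Ax^{k}=y$), and that the feedback support $T_{k}$, with $|T_{k}|=f(k)$, is precisely the index set of the $f(k)$ largest absolute entries of this feasible vector. Hence both lemmas apply essentially verbatim with the identifications $u'=u^{k-1}$ and $T=T_{k}$, and $u^{k}$ is exactly the feedback of $x^{k}$ on $T_{k}$.

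First I would invoke \emph{Lemma \ref{h}} with $u'=u^{k-1}$. Since $u^{k-1}$ is $f(k-1)$-sparse we take $s'=f(k-1)$, and with $t=f(k)$ and $T=T_{k}$ (the top-$f(k)$ support of $u^{k-1}+A^{\ast}(AA^{\ast})^{-1}(y-Au^{k-1})$), the hypothesis $f(k)\geq s$ of that lemma gives the tail bound
\[
\|x_{T_{k}^{c}}\|_{2}\leq\sqrt{2}\left(\gamma_{s+f(k-1)+f(k)}\|x-u^{k-1}\|_{2}+\sqrt{1+\theta_{s+f(k)}}\,\|e\|_{2}\right).
\]
Next I would apply \emph{Lemma \ref{lemm6}} with $x'=x^{k}$ (feasible) and $t=f(k)=|T_{k}|$, so that $u^{k}$ is the feedback of $x^{k}$ on $T_{k}$, producing
\[
\|x-u^{k}\|_{2}\leq\frac{\sqrt{1+\delta_{f(k)}}\,\|e\|_{2}}{1-\delta_{s+f(k)}}+\frac{\|x_{T_{k}^{c}}\|_{2}}{\sqrt{1-\delta_{s+f(k)}^{2}}}.
\]

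Substituting the first display into the second and separating the $\|x-u^{k-1}\|_{2}$ and $\|e\|_{2}$ contributions yields the claim, once one records the algebraic identities $\sqrt{2}\,\gamma/\sqrt{1-\delta^{2}}=\sqrt{2\gamma^{2}/(1-\delta^{2})}$ for the contraction coefficient and $\sqrt{2}\sqrt{1+\theta}=\sqrt{2(1+\theta)}$ for the noise coefficient. I would note that the stated RIP/P-RIP hypothesis $2\gamma_{s+f(k-1)+f(k)}^{2}+\delta_{s+f(k)}^{2}<1$ is not needed to derive the inequality itself; rather, it is exactly the condition guaranteeing that the contraction factor $\sqrt{2\gamma_{s+f(k-1)+f(k)}^{2}/(1-\delta_{s+f(k)}^{2})}$ is strictly less than $1$, which is what makes iterating the recursion convergent.

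The main obstacle here is bookkeeping rather than any delicate estimate. I must verify that $x^{k}$ is genuinely feasible so that the feasibility step in the proof of \emph{Lemma \ref{lemm6}} holds; that the top-$f(k)$ support $T_{k}$ extracted from $x^{k}=u^{k-1}+A^{\ast}(AA^{\ast})^{-1}(y-Au^{k-1})$ coincides with the index set $T$ required by \emph{Lemma \ref{h}}; and that the sparsity levels line up so the subscripts emerge exactly as $s+f(k-1)+f(k)$ on $\gamma$, $s+f(k)$ on $\delta$ and $\theta$, and $f(k)$ on $\delta_{f(k)}$. I would also confirm the applicability condition $f(k)\geq s$ (the ``$k\geq s$'' qualification) that \emph{Lemma \ref{h}} demands through its requirement $t\geq s$.
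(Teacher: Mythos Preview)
Your proposal is correct and follows essentially the same route as the paper: apply \emph{Lemma~\ref{h}} with $u'=u^{k-1}$, $s'=f(k-1)$, $t=f(k)$, $T=T_{k}$ to bound $\|x_{T_{k}^{c}}\|_{2}$, then apply \emph{Lemma~\ref{lemm6}} with $T=T_{k}$ and $u'=u^{k}$ to bound $\|x-u^{k}\|_{2}$ in terms of $\|x_{T_{k}^{c}}\|_{2}$, and combine. Your additional bookkeeping remarks (feasibility of $x^{k}$, identification of $T_{k}$ with the top-$f(k)$ support of the rewritten NST iterate, the role of the RIP/P-RIP hypothesis purely as a contraction condition, and the link between ``$k\geq s$'' and the requirement $t\geq s$ in \emph{Lemma~\ref{h}}) are all accurate and in fact spell out points the paper leaves implicit.
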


\begin{proof}
 Applying {\em Lemma \ref{h}} to $u'=u^{k-1}$ and $T=T_{k}$ for $k\geq s$ gives rise to
\begin{equation*}
\|x_{T^{c}_{k}}\|\leq\sqrt{2}\left(\gamma_{s+f(k-1)+f(k)}\|x-u^{k-1}\|_{2}+\sqrt{1+\theta_{s+f(k)}}\|e\|_{2}\right),%\|[A^{\ast}(AA^\ast)^{-1}e]_{T\triangle S}\|_{2})
\end{equation*}
\noindent and setting $u'=u^{k}$ and $T=T_{k}$ in {\em Lemma \ref{lemm6}}, one obtains
\begin{equation*}
\|(x-u^{k})\|_{2}\leq\frac{1}{\sqrt{(1-\delta_{s+f(k)}^{2})}}\|x_{T^{c}_{k}}\|_{2}+\frac{\sqrt{1+\delta_{f(k)}}}{1-\delta_{s+f(k)}}\|e\|_{2}.
\end{equation*}
\noindent Combining these two inequalities, we have
\begin{equation*}
\begin{aligned}
\|(x-u^{k})\|_{2}\leq\sqrt{\frac{2\gamma_{s+f(k-1)+f(k)}^{2}}{(1-\delta_{s+f(k)}^{2})}}\|x-u^{k-1}\|_{2}+\left(\frac{\sqrt{1+\delta_{f(k)}}}{1-\delta_{s+f(k)}}+
\frac{\sqrt{2(1+\theta_{s+f(k)})}}{\sqrt{1-\delta_{s+f(k)}^{2}}}\right)\|e\|_{2}.
\end{aligned}
\end{equation*}
\end{proof}

Through {\em Theorem \ref{ha}}, if the P-RIP and RIP constants of $A$ satisfy $2\gamma_{s+f(k-1)+f(k)}^{2}+\delta_{s+f(k)}^{2}<1$, then the sequence of $\{u_{k}\}$ generated by AdptNST+HT+$f$-FB converges to $x$.

\begin{remark}\label{hhhh1}
If a prior estimation of the sparsity $s$ is assumed known, then setting $f(k)=s$ yields the original NST+HT+FB, where the resulting sequence $u^{k}$ satisfies
\begin{equation*}
\begin{aligned}
\|x-u^{k}\|_{2}&\leq\sqrt{\frac{2\gamma_{3s}^{2}}{(1-\delta_{2s}^{2})}}\|x-u^{k-1}\|_{2}+\left(\frac{\sqrt{1+\delta_{s}}}{1-\delta_{2s}}+
\frac{\sqrt{2(1+\theta_{2s})}}{\sqrt{1-\delta_{2s}^{2}}}\right)\|e\|_{2},~k\geq1.
\end{aligned}
\end{equation*}
As shown in {\em Remark \ref{hhhh1}}, if the P-RIP and RIP constants of $A$ satisfy $\delta_{2s}^{2}+2\gamma_{3s}^{2}<1$, then the sequence of $\{u^{k}\}$ generated by NST+HT+FB converges to $x$. Compared to the condition $\delta_{2s}+\sqrt{2}\gamma_{3s}<1$ in \cite{li2014fast}, the condition in {\em Remark \ref{hhhh1}} is obvious improved. Furthermore, if $A$ is the Parseval frame, the P-RIP and RIP condition is relaxed to RIP condition, i.e., $\delta_{2s}^{2}+2\delta_{3s}^{2}<1$.
\end{remark}

\section{Numerical experiments}
In this section, we demonstrate through extensive numerical experiments the claim that the convergence speed of the proposed algorithm is improved by elaborating the number of indices $f(k)$ selected per iteration. We first compare the performances within the class of the iterative thresholding algorithms with feedbacks by taking, respectively, cardinality of support per iteration as $|T_{k}|=s$ (\text{NST+HT+FB}), $|T_{k}|=k$, $|T_{k}|=2k$, $|T_{k}|=4k$, $|T_{k}|=6k$ and $|T_{k}|=k^{2}$.   Then the overall performance of AdptNST+HT+$f$-FB in terms of execution-time and frequency of exact recovery is compared with state-of-the-art greedy iterative algorithms including accelerated iterative hard thresholding (AIHT) \cite{AIHT}, generalized orthogonal matching pursuit (GOMP) \cite{gOMP},  conjugate gradient iterative hard thresholding algorithm (CGIHT) \cite{CGIHT}, and graded hard thresholding pursuit (GHTP) \cite{Bouchot,GHTP}.

\begin{figure}[H]
\centering
\begin{tabular}{cc}
  % after \\: \hline or \cline{col1-col2} \cline{col3-col4} ...
\includegraphics[width=6cm]{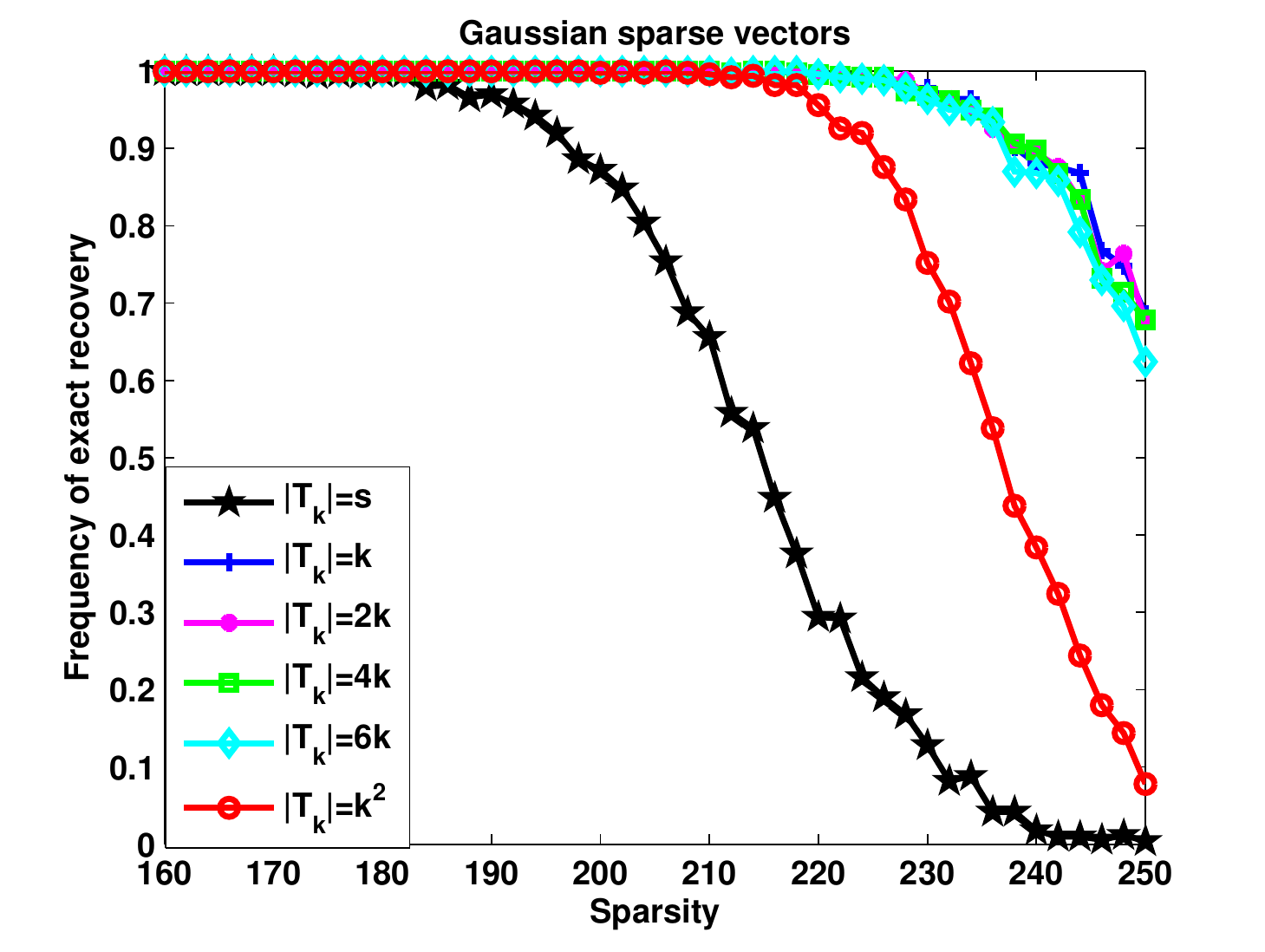}
&\includegraphics[width=6cm]{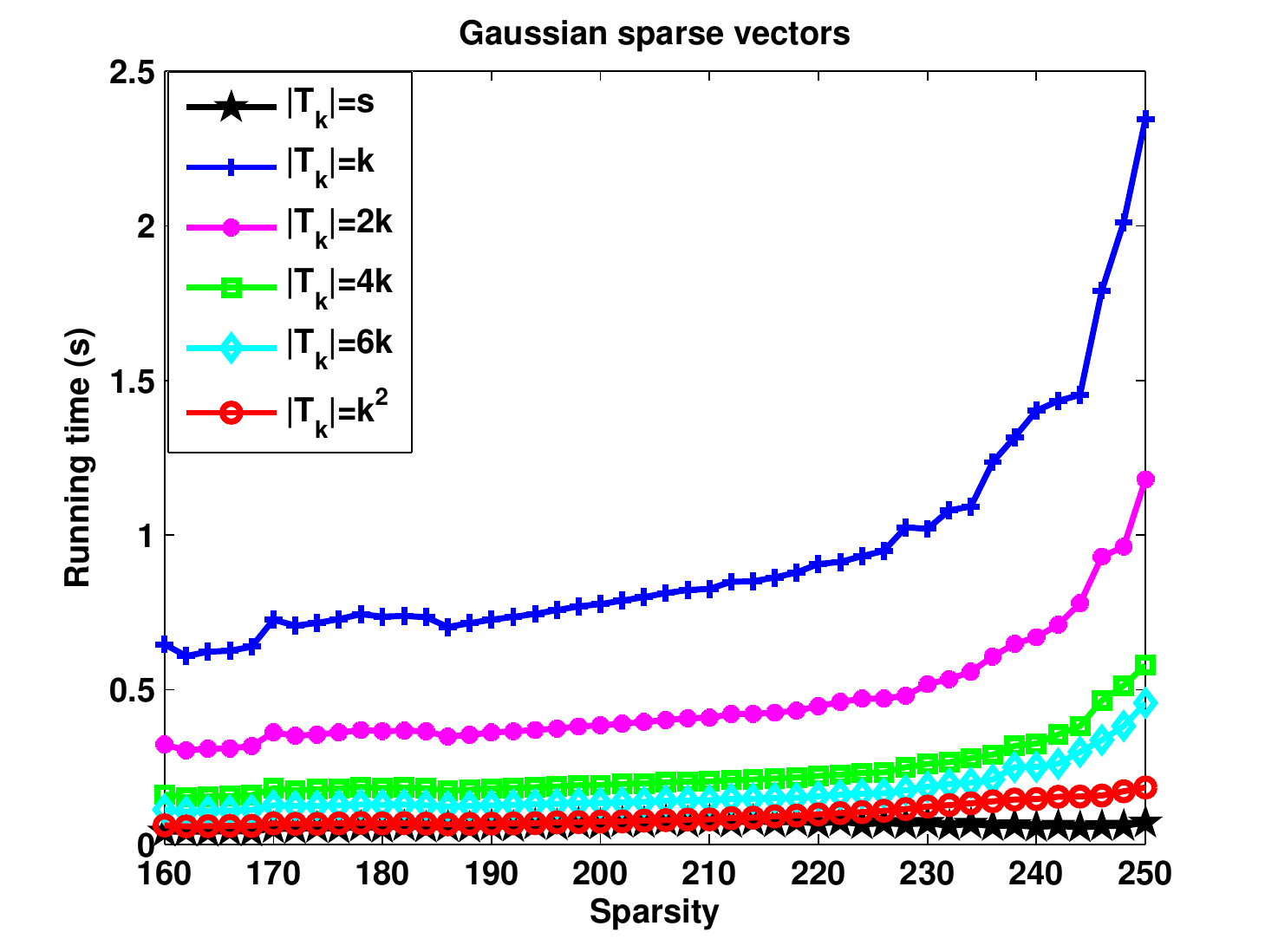}\\
\includegraphics[width=6cm]{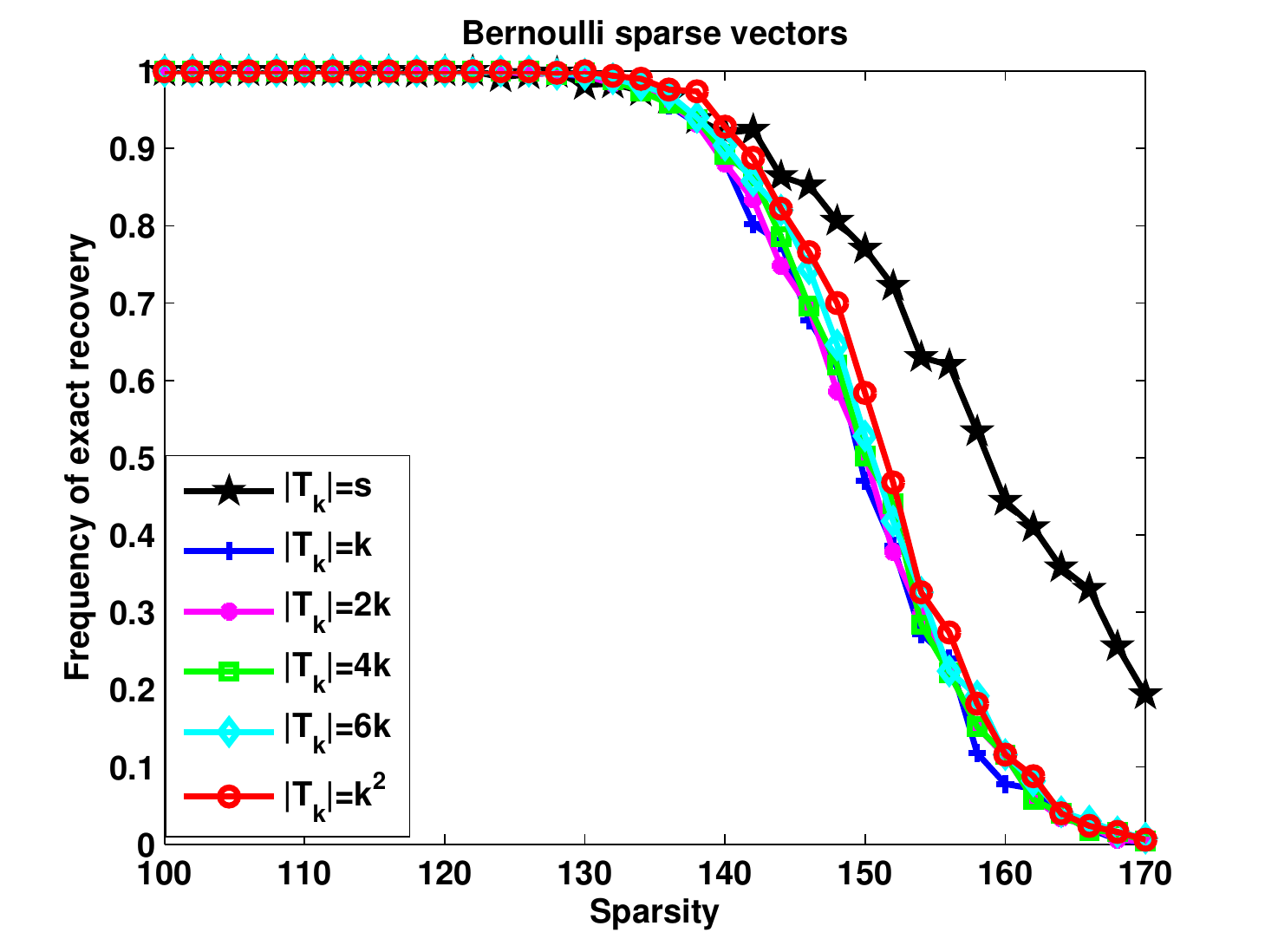}
&\includegraphics[width=6cm]{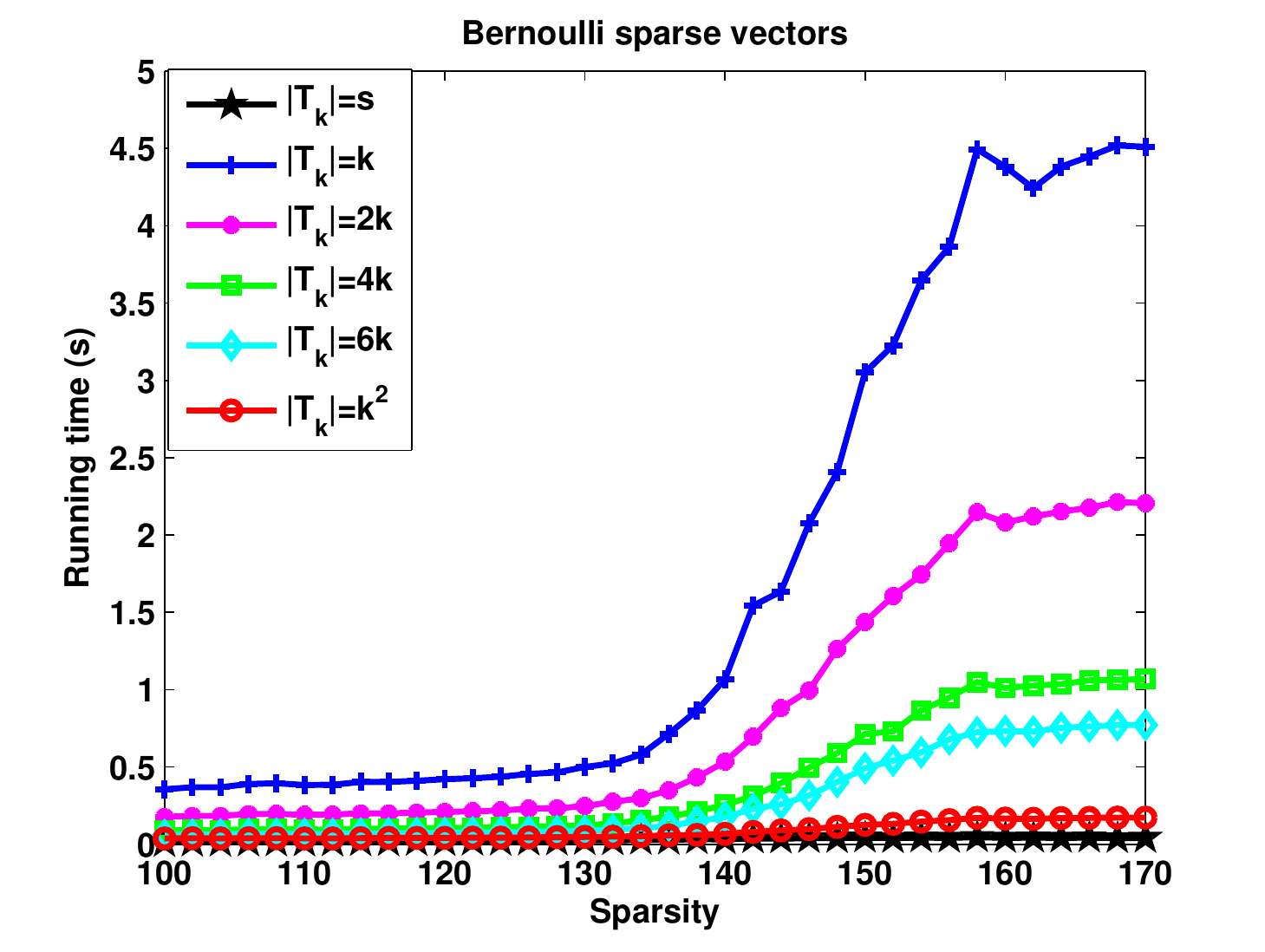}\\
\includegraphics[width=6cm]{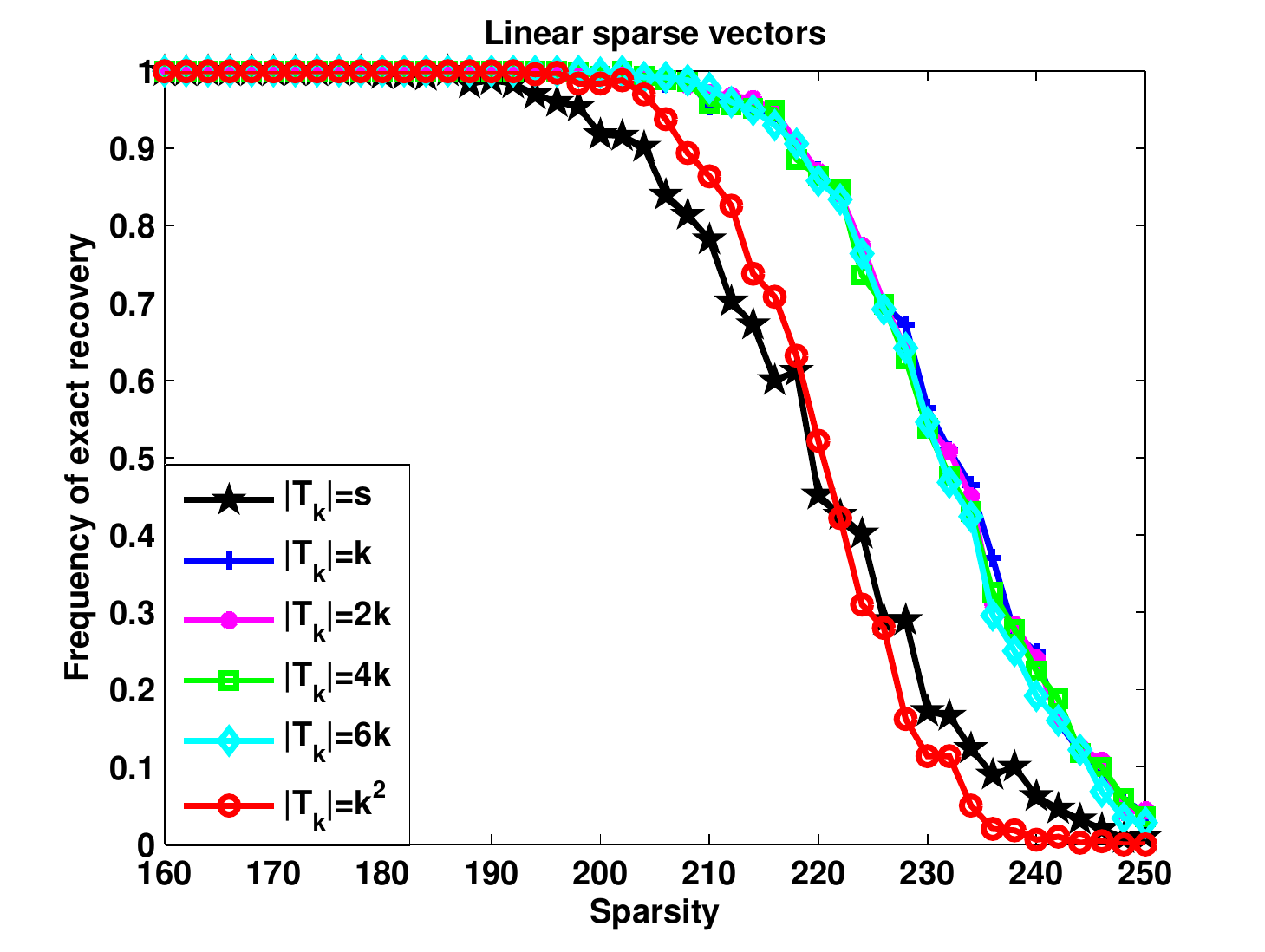}
&\includegraphics[width=6cm]{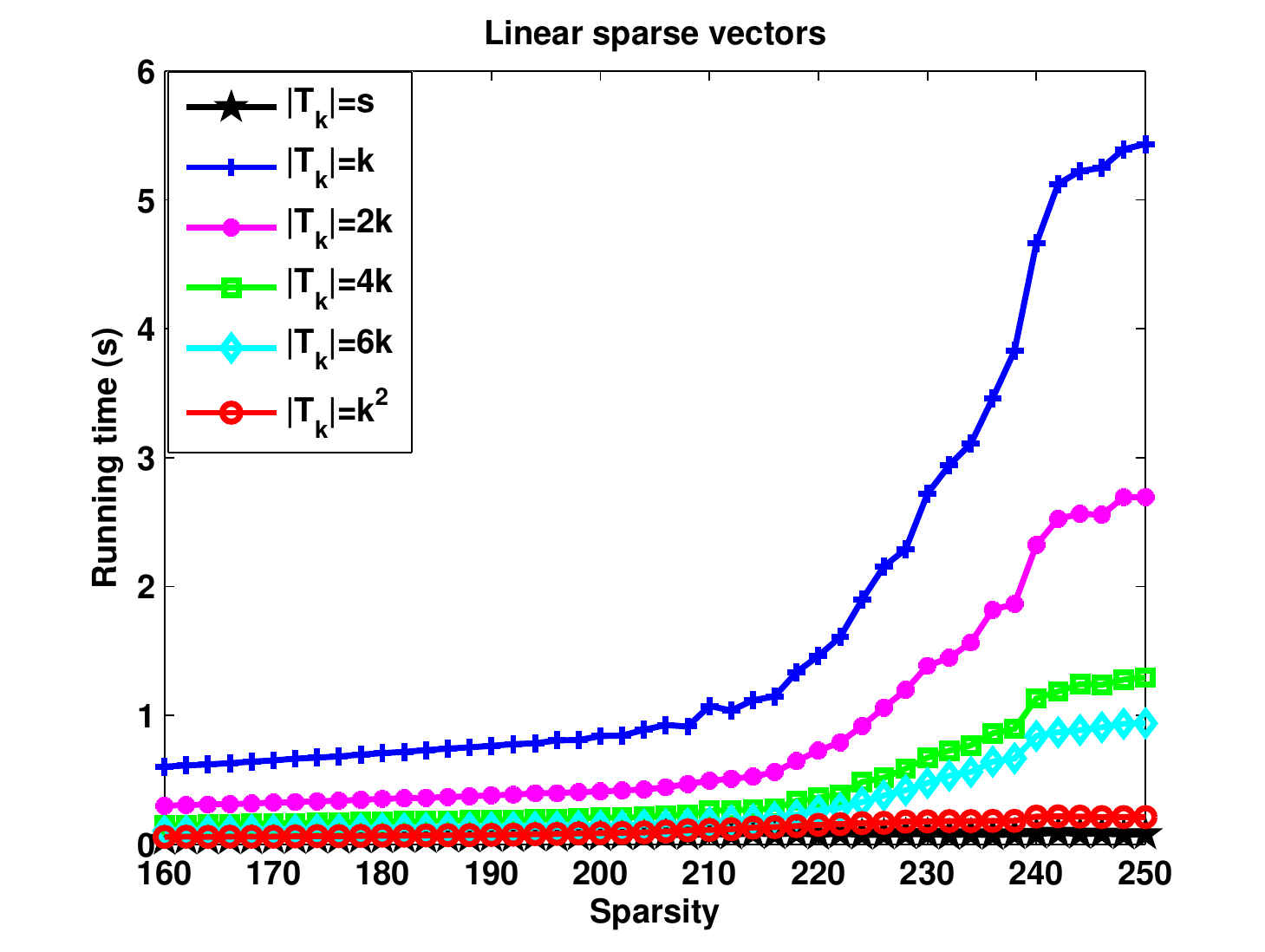}\\
\end{tabular}
\caption{Top (left to right): Frequency of successful recoveries using Gaussian sparse vectors. Running time using Gaussian sparse vectors. Middle (left to right): Frequency of successful recoveries of Bernoulli sparse vectors. Running time using Bernoulli sparse vectors. Bottom (left to right): Frequency of successful recoveries using linear sparse vectors. Running time using linear sparse vectors.}
\label{challengefig5}
\end{figure}
Note that AIHT and CGIHT need a prior estimation of the sparsity level, while GOMP, GHTP and AdptNST+HT+$f$-FB are more alike with increasing sizes of the index through iterations. GOMP is a generalization of OMP \cite{hongk} in the sense that multiple $P$ indices are identified at each iteration, where the value $P$ should not exceed $\sqrt{M}$. For a fair comparison, the particular index selection $|T_{k}|=k^{2}$ are used for GHTP and AdptNST+HT+$f$-FB in view of execution time and recovery accuracy. Since experiments focus on performance comparisons of greedy algorithms, we must comment that the comparisons are far from complete. In addition, the state of the art greedy algorithms, e.g. CoSaMP, are not included because that the running time of these algorithms are more than one order of magnitude higher than that of compared algorithms presented. The associated matlab codes can be downloaded from the authors' webpages or provided by authors in personal communication. A matlab implementation of the proposed algorithm is also available at \url{https://www.dropbox.com/s/uoh9sisbnwpy6ef/AdptNST%2BHT%2Bf-FB.zip?dl=0}.

Two performance metrics are used throughout the experiments. The first metric refers to the frequency/rate of exact recovery. An exact recovery is recorded whenever $\|x^{n}-x\|_{2}/\|x\|_{2}\leq10^{-4}$. Each algorithm is tested for $500$ (random) trials for every value of sparsity $s$. The second metric is the execution-time. The normalized mean square error (NMSE) is employed to evaluate robustness of algorithms and it is calculated by averaging normalized squared errors $\|x-\widehat{x}\|_{2}^{2}/\|x\|_{2}^{2}$ of $500$ independent trials, where $\widehat{x}$ denotes the estimate of the original signal $x$. The measurement matrix $A$ is an $500\times1000$ Gaussian random matrix with standard i.i.d. Gaussian entries.  The support of a sparse signal is also chosen randomly. The nonzero entries of Gaussian sparse signals are drawn independently from the Gaussian distribution with zero mean and unit variance, while the ones of the Bernoulli sparse signals are drawn independently from $\pm1$ with equiprobability and nonzero entries of $s$-sparse linear sparse vectors are $x_{j}=(s+1-j)/s$ for $j\in\{1,\ldots,s\}$.

\subsection{Performance comparison within the class of AdptNST+HT+$f$-FB algorithms}
%\noindent{\bf Tests of different sizes $|T_k|=f(k)$ at each iteration $k$}\ \
We first study the effect of the number of indices $|T_k|=f(k)$ selected per iteration of the AdptNST+HT+$f$-FB algorithm.   For this experiment, the sparsity $s$ of Gaussian and linear sparse vectors varies from $160$ to $250$.  The matrix $A$ has again $M=500$ rows and $N=1000$ columns.

As shown in the top and last row of {\em Figure 1}, the frequencies of exact recovery of all algorithms are almost identical when $s\leq190$. However, when $s>190$, the performance of AdptNST+HT+$f$-FB with $|T_{k}|=s$, i.e., NST+HT+FB, degrades notably.  Interestingly, AdptNST+HT+$f$-FB algorithms without the prior estimation of $s$ offer enhanced performance. The second column of {\em Figure 1} also plots the execution-time.  We see that the efficiency of AdptNST+HT+$f$-FB with $|T_{k}|=k^{2}$ is tremendously improved. The running time of AdptNST+HT+$f$-FB with $|T_{k}|=k^{2}$
\begin{figure}[H]
\centering
\begin{tabular}{cc}
  % after \\: \hline or \cline{col1-col2} \cline{col3-col4} ...
\includegraphics[width=6cm]{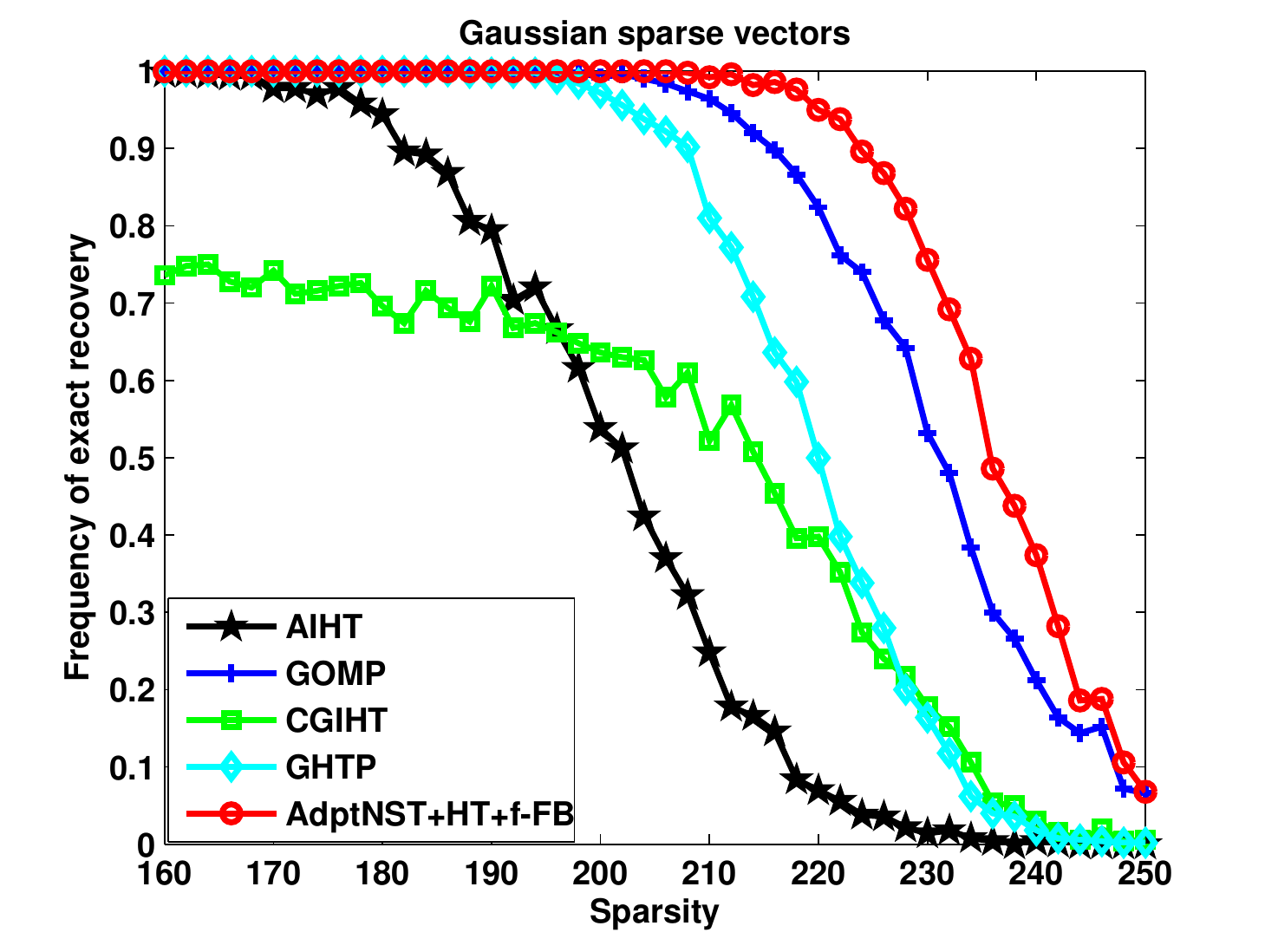}
&\includegraphics[width=6cm]{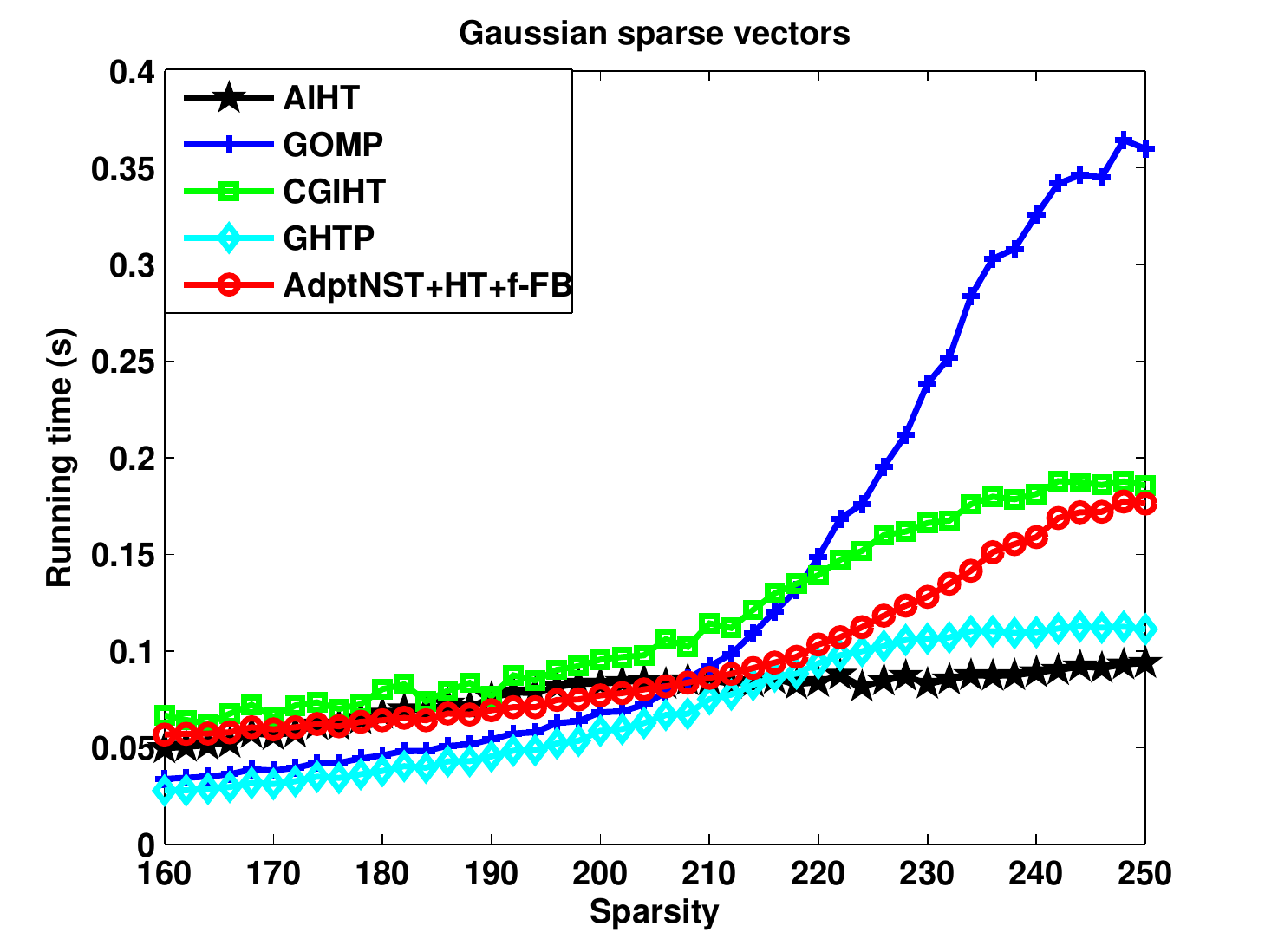}\\
\includegraphics[width=6cm]{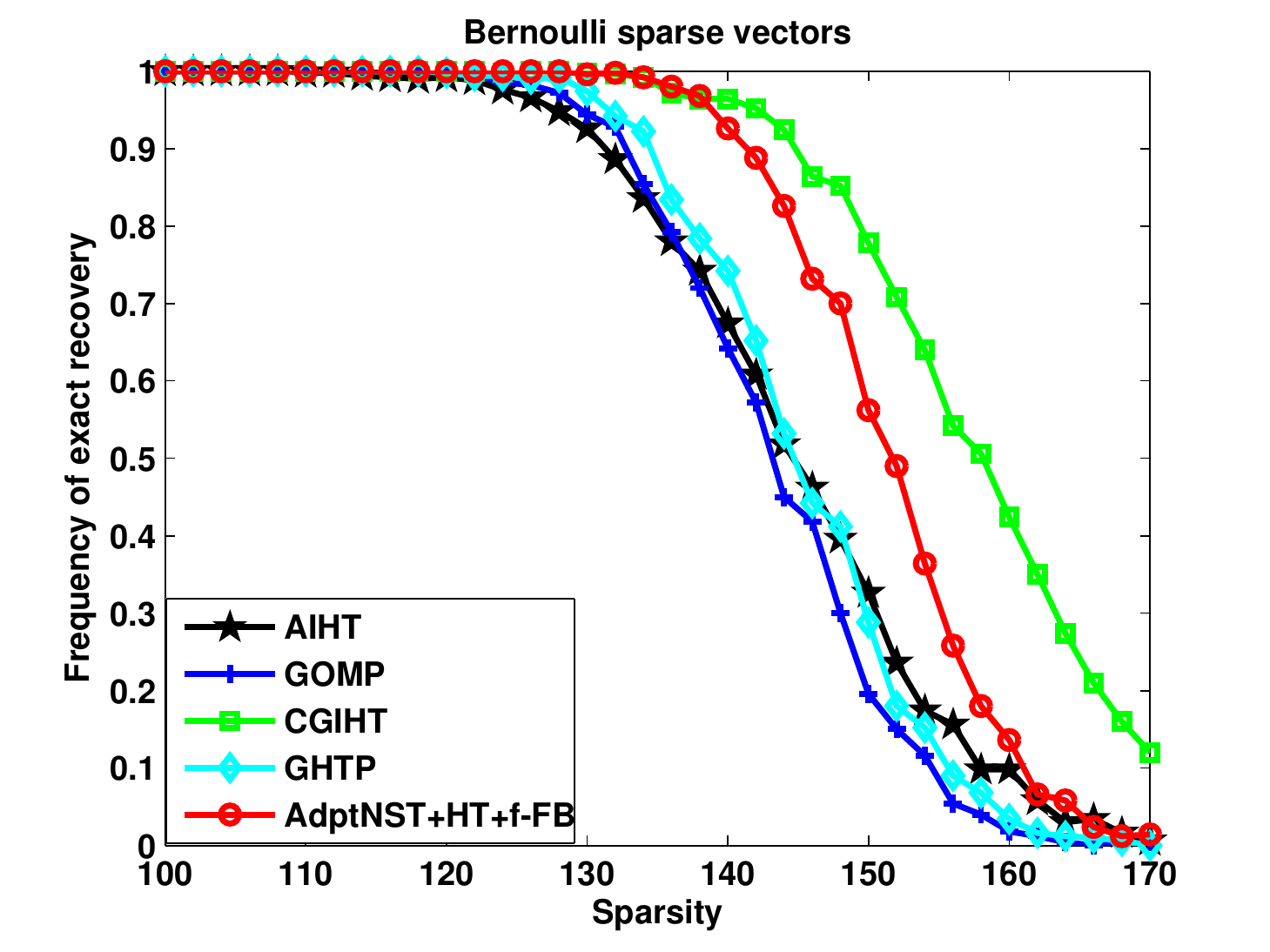}
&\includegraphics[width=6cm]{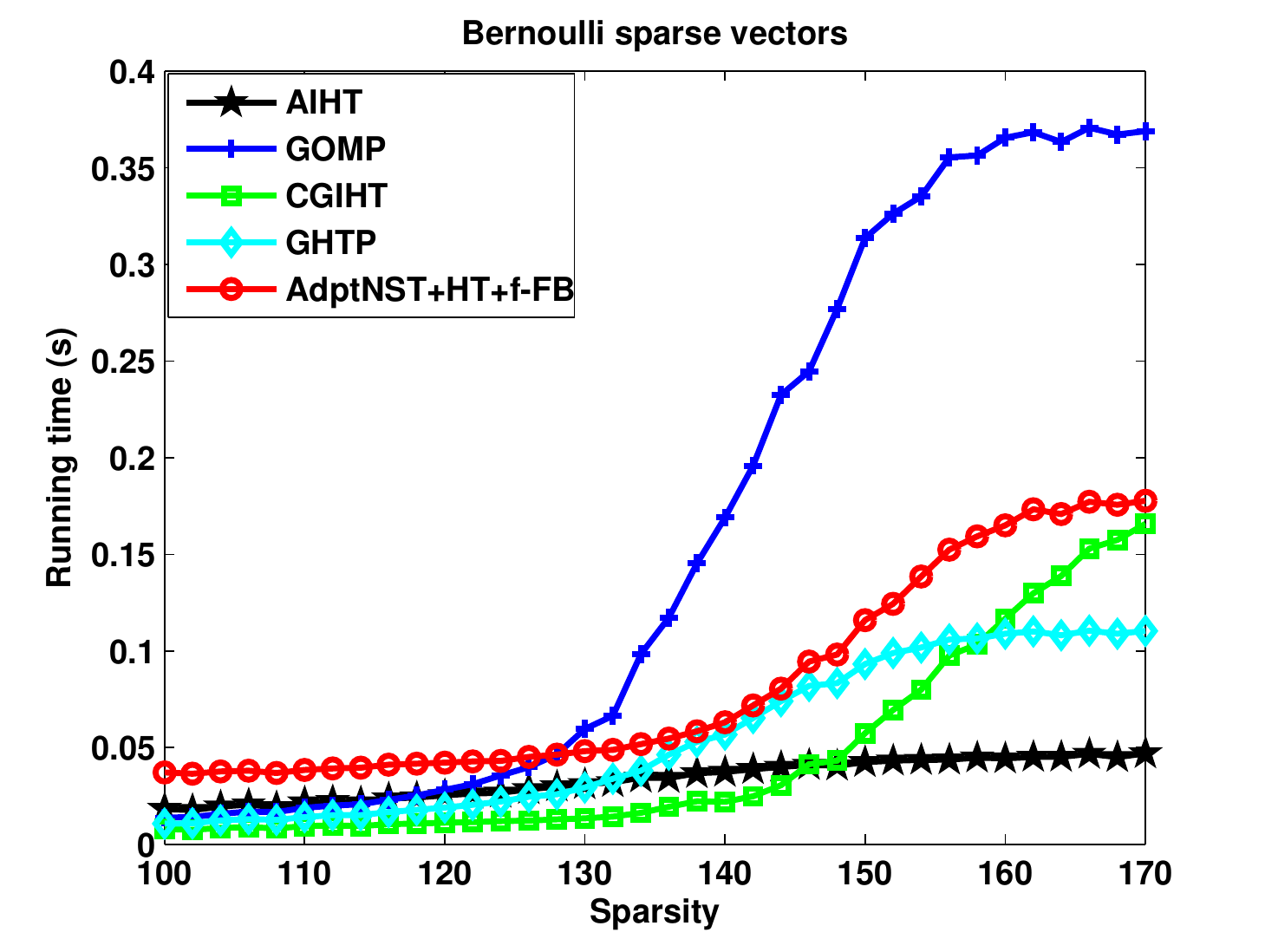}\\
\includegraphics[width=6cm]{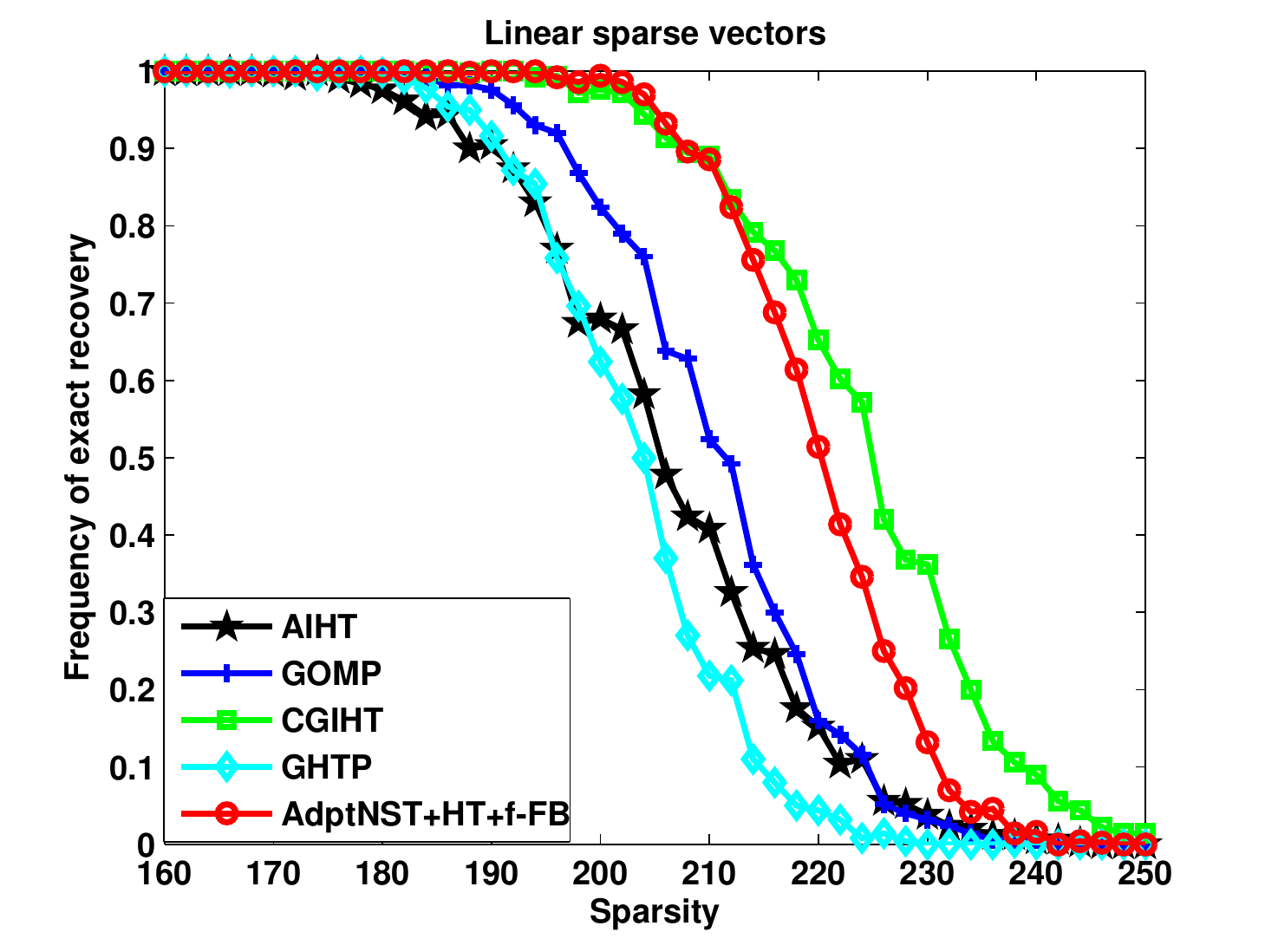}
&\includegraphics[width=6cm]{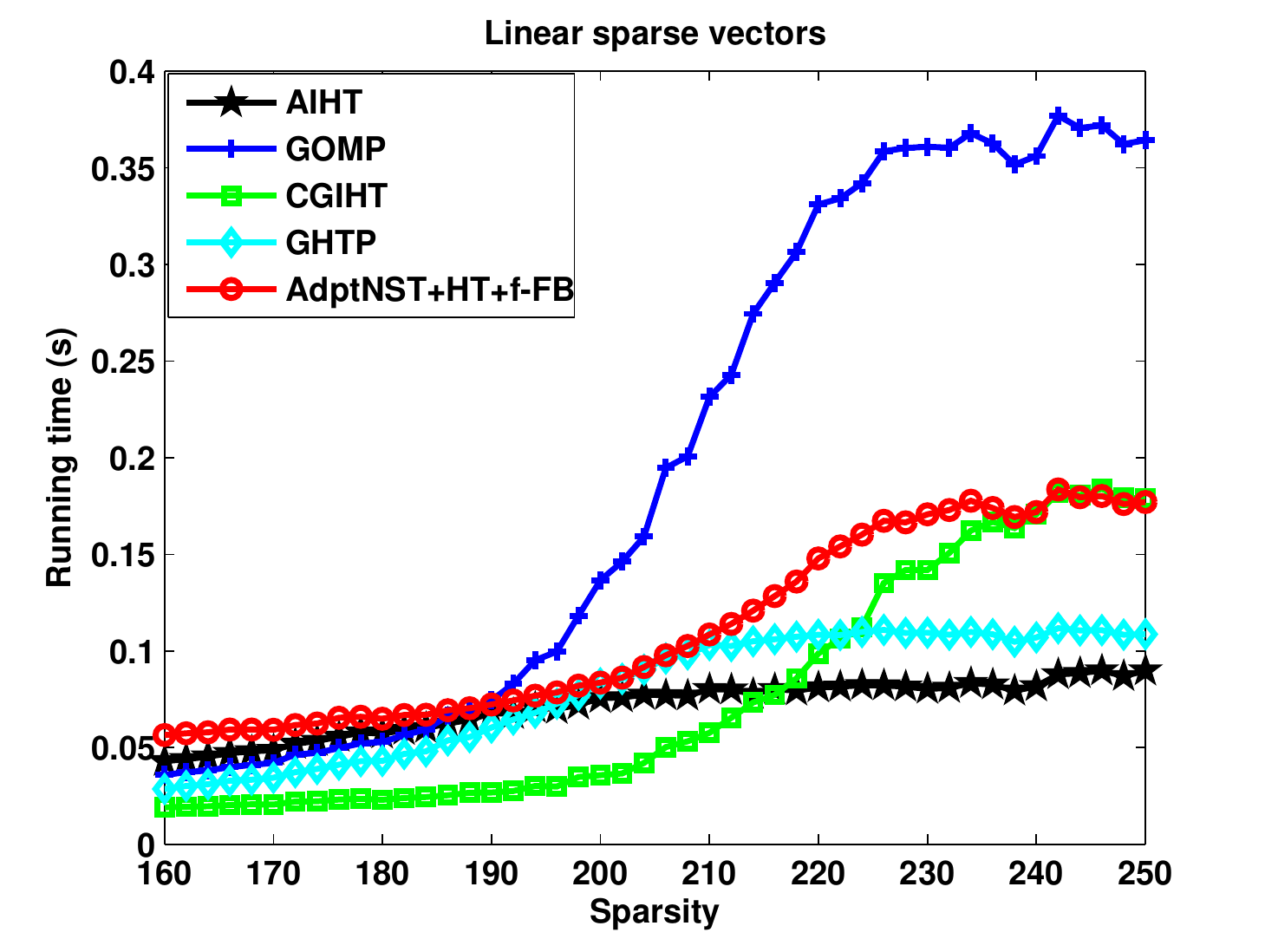}\\
\end{tabular}
\caption{Top (left to right): Frequency of successful recoveries using Gaussian sparse vectors. Running time using Gaussian sparse vectors. Middle (left to right): Frequency of successful recoveries using Bernoulli sparse vectors. Running time using Bernoulli sparse vectors. Bottom (left to right): Frequency of successful recoveries using linear sparse vectors. Running time using linear sparse vectors.}
\label{challengefig5}
\end{figure}
\noindent  is more than two orders-of-magnitude faster than that $|T_{k}|=6k$, and ten orders-of-magnitude faster than $|T_{k}|=k$. It verifies the claim that one can accelerate the convergence speed of the class of AdptNST+HT+$f$-FB algorithms by adjusting the size
of the support per iteration.

The middle row shows reconstructed results with Bernoulli sparse signals with sparsity varying from $120$ to $170$. It is evident that all cases present same performance when $s<140$ and NST+HT+FB outperforms all other cases when $s\geq 140$. Referring to running time, setting $|T_{k}|=k^{2}$ still outperforms other cases of AdptNST+HT+$f$-FB algorithms.  For other choices of linear $f(k)$, AdptNST+HT+$f$-FB with $|T_{k}|=k$, $|T_{k}|=2k$, $|T_{k}|=4k$, and $|T_{k}|=6k$ obtain better recovery than AdptNST+HT+$f$-FB with $|T_{k}|=k^{2}$ and NST+HT+FB.   However, AdptNST+HT+$f$-FB with $|T_{k}|=k^{2}$ and NST+HT+FB are typically faster than other cases.

Both Gaussian, Bernoulli and linear random signals demonstrate the benefit of selecting an appropriate number of indices per iteration. These comprehensive comparison points to the AdptNST+HT+$f$-FB scheme with $|T_{k}|=k^{2}$ as one significant approach with the balanced best performance.

\subsection{Overall comparison with known state-of-the-art greedy algorithms}

Presented here are comparisons among our AdptNST+HT+$f$-FB and state-of-the-art techniques such as AIHT, GOMP, CGIHT, and GHTP in terms of frequency of exact recovery and running time. Gaussian random sparse vectors are tested first.

In this experiment, sparsity varies from $160$ to $250$ as well with the same matrix $A$ of $M=500$ rows and $N=1000$ columns. It can be seen that AdptNST+HT+$f$-FB outperforms all other algorithms by a great margin in successful recovery frequencies. For the execution-time comparison, AIHT, CGIHT, GHTP and AdptNST+HT+$f$-FB are in the same level, which are all better than GOMP.

Experimental tests (Bernoulli and linear random sparse vectors) show that AdptNST+HT+$f$-FB still delivers reasonable performance better than that of AIHT, GOMP, CGIHT, and GHTP, though slightly underperforms that of CGIHT.

It is worth noting, however, that AdptNST+HT+$f$-FB does not require a prior knowledge on the sparsity $s$, whereas CGIHT still does.  These experiments suggest that no algorithm is consistently superior for all types of measurement matrices and sparse vectors.

But AdptNST+HT+$f$-FB is observed to have obviously advantageous balance of efficiency, adaptivity and accuracy compared with other algorithms.
\begin{figure}[H]
\centering
\begin{tabular}{ccc}
  % after \\: \hline or \cline{col1-col2} \cline{col3-col4} ...
\includegraphics[width=5.2cm]{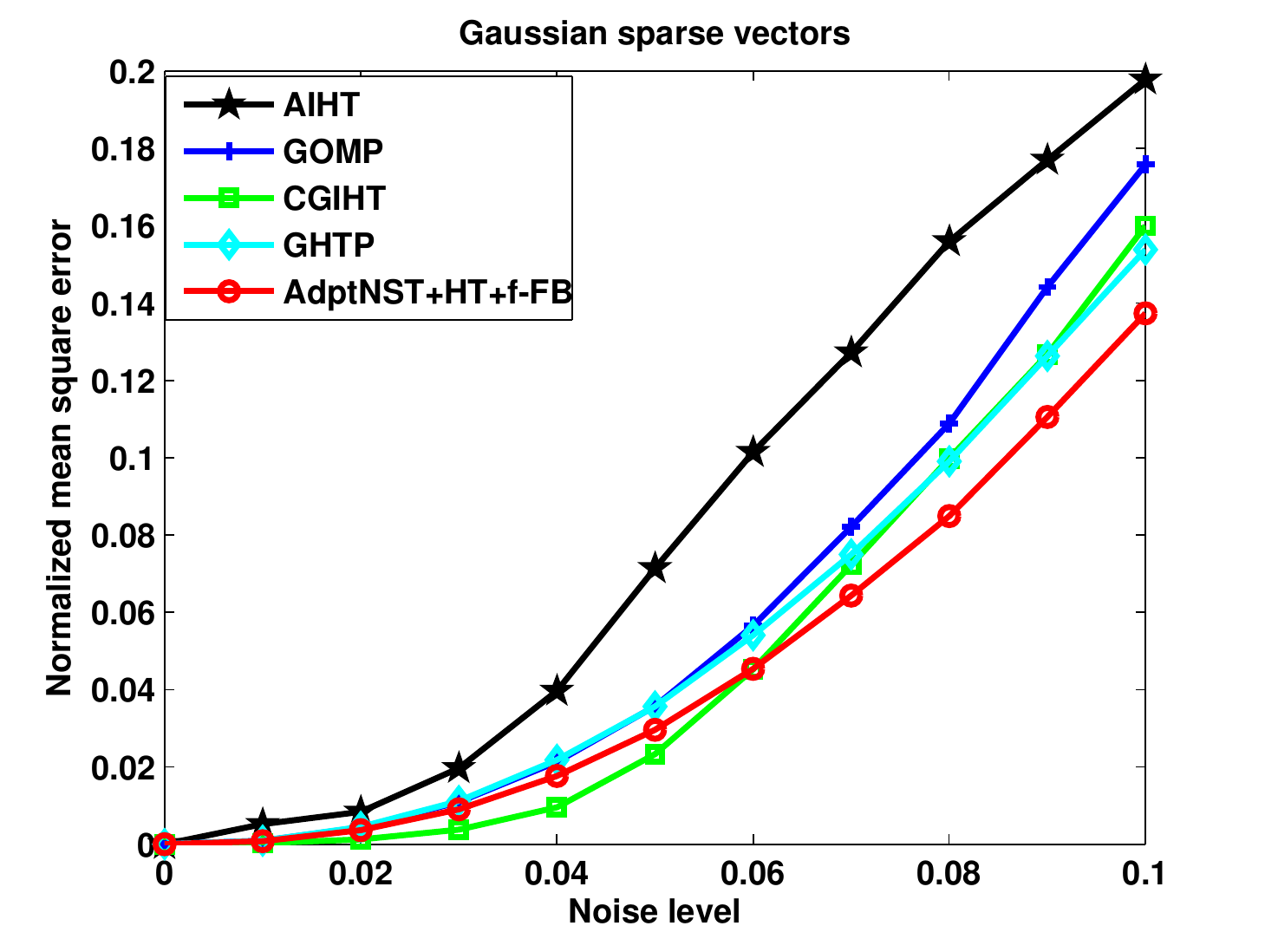}
&\includegraphics[width=5.2cm]{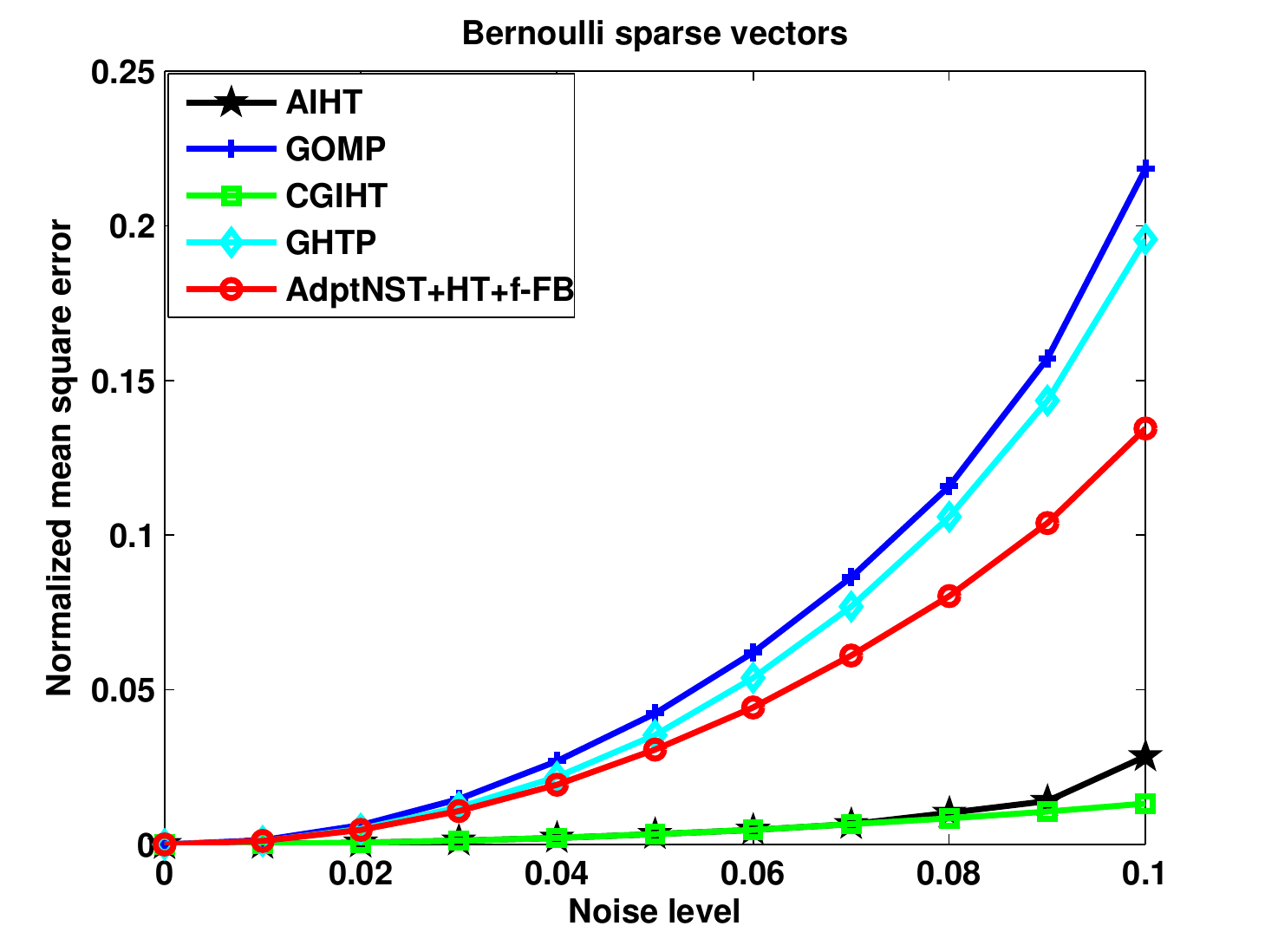}
&\includegraphics[width=5.2cm]{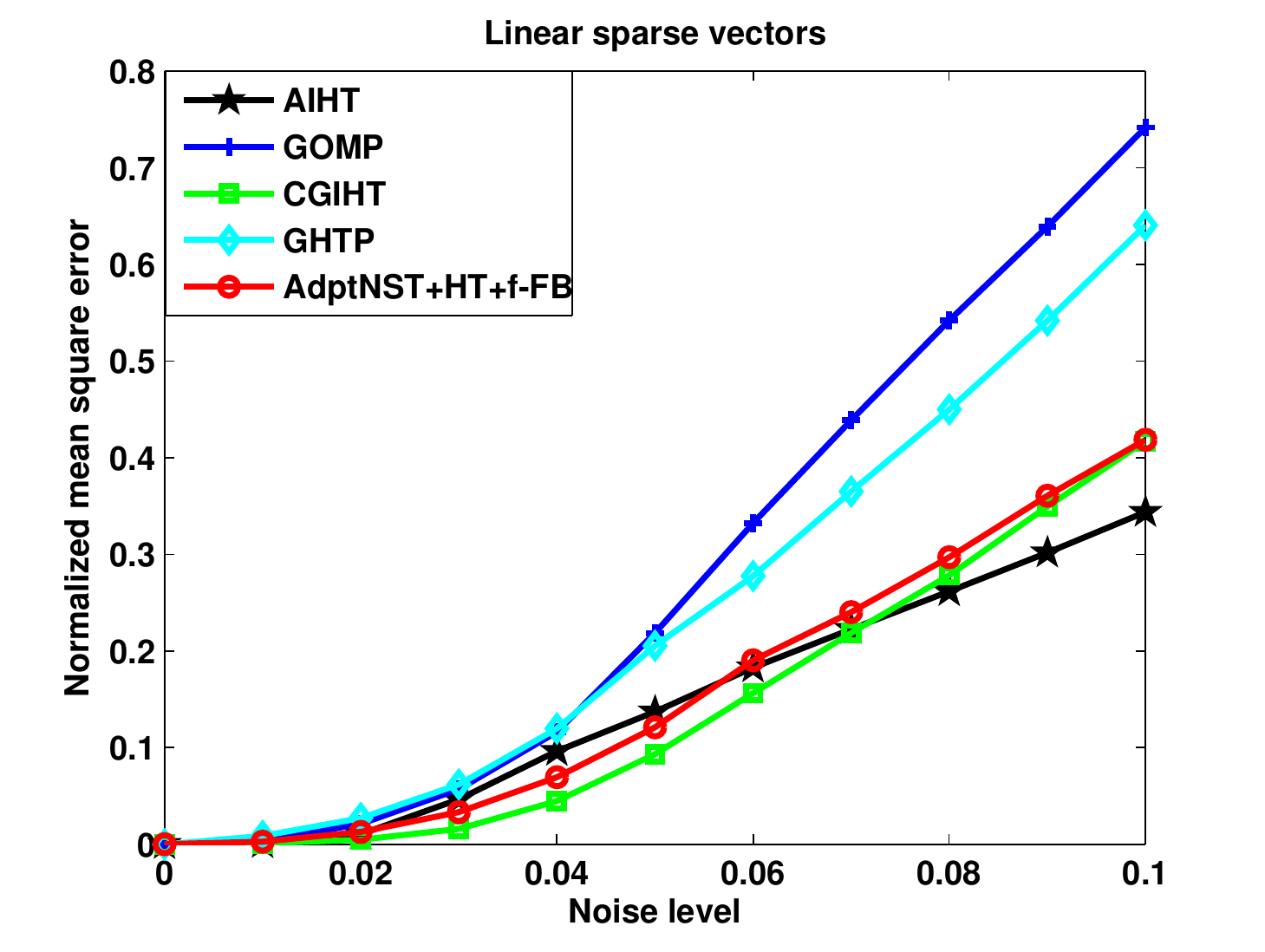}\\
\end{tabular}
\caption{(left to right): Normalized mean square error using Gaussian sparse vectors. Normalized mean square error using Bernoulli sparse vectors. Normalized mean square error using linear sparse vectors.}
\label{challengefig55}
\end{figure}
\subsection{Comparison in robustness to noise}
This section is to investigate performances of tested algorithms with noisy measurements.  There are three numerical experiments carried out. The first is to deal with the contaminated Gaussian sparse signals with fixing parameters $M=500$, $N=1000$, $s=200$ and varying standard deviation of zero-mean white Gaussian noise from $0$ to $0.1$. As shown in the {\em Figure \ref{challengefig55}}, AdptNST+HT+$f$-FB posses the best stabilities.

The second experiment is to recover Bernoulli sparse signals with fixing parameters $M=500$, $N=1000$, $s=120$. It is clear that CGIHT and AIHT outperforms other algorithms.

The third experiment is to examine the performance of recovering linear sparse vectors with the same parameters as the Gaussian case, it can be observed that AIHT, CGIHT and AdptNST+HT+$f$-FB deliver better performance than other algorithms.

These experiments show that the proposed class of AdptNST+HT+$f$-FB algorithms is superior to other state-of-the-art algorithms with balanced robustness to noise, and their adaptivity for requiring no prior knowledge about the sparsity $s$.

\subsection{Comparison applied to super-resolution: complex measurement matrix}
In \cite{HanNv}, the authors develop a mathematical theory of super-resolution, which aims to recover the high end of spectrum of an object from coarse scale information, i.e., samples from the low end of spectrum. Assume that a signal of interest can be represented as a superposition of Dirac measures
\begin{figure}[H]
\centering
\begin{tabular}{ccc}
  % after \\: \hline or \cline{col1-col2} \cline{col3-col4} ...
\includegraphics[width=5.2cm]{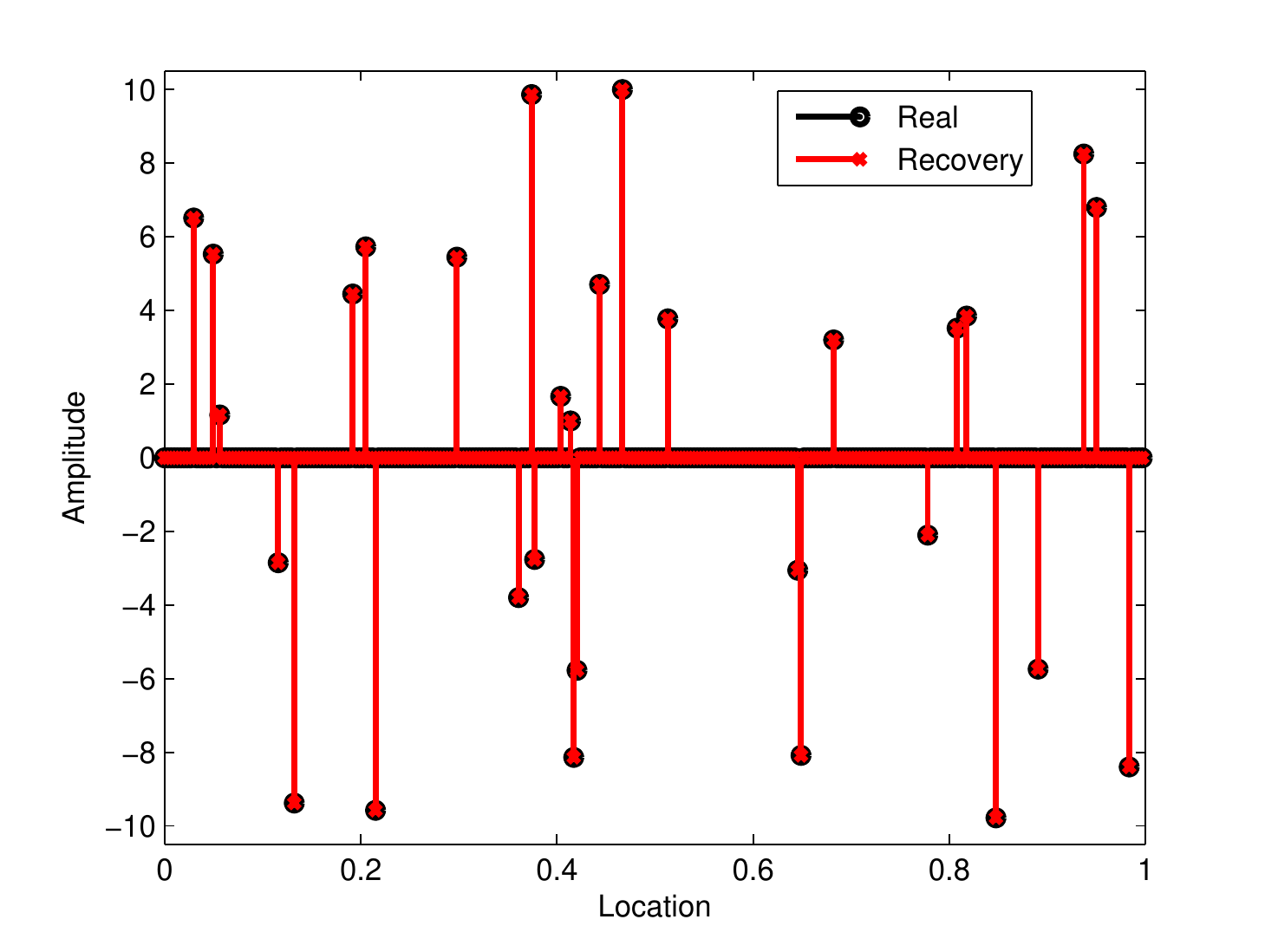}
&\includegraphics[width=5.2cm]{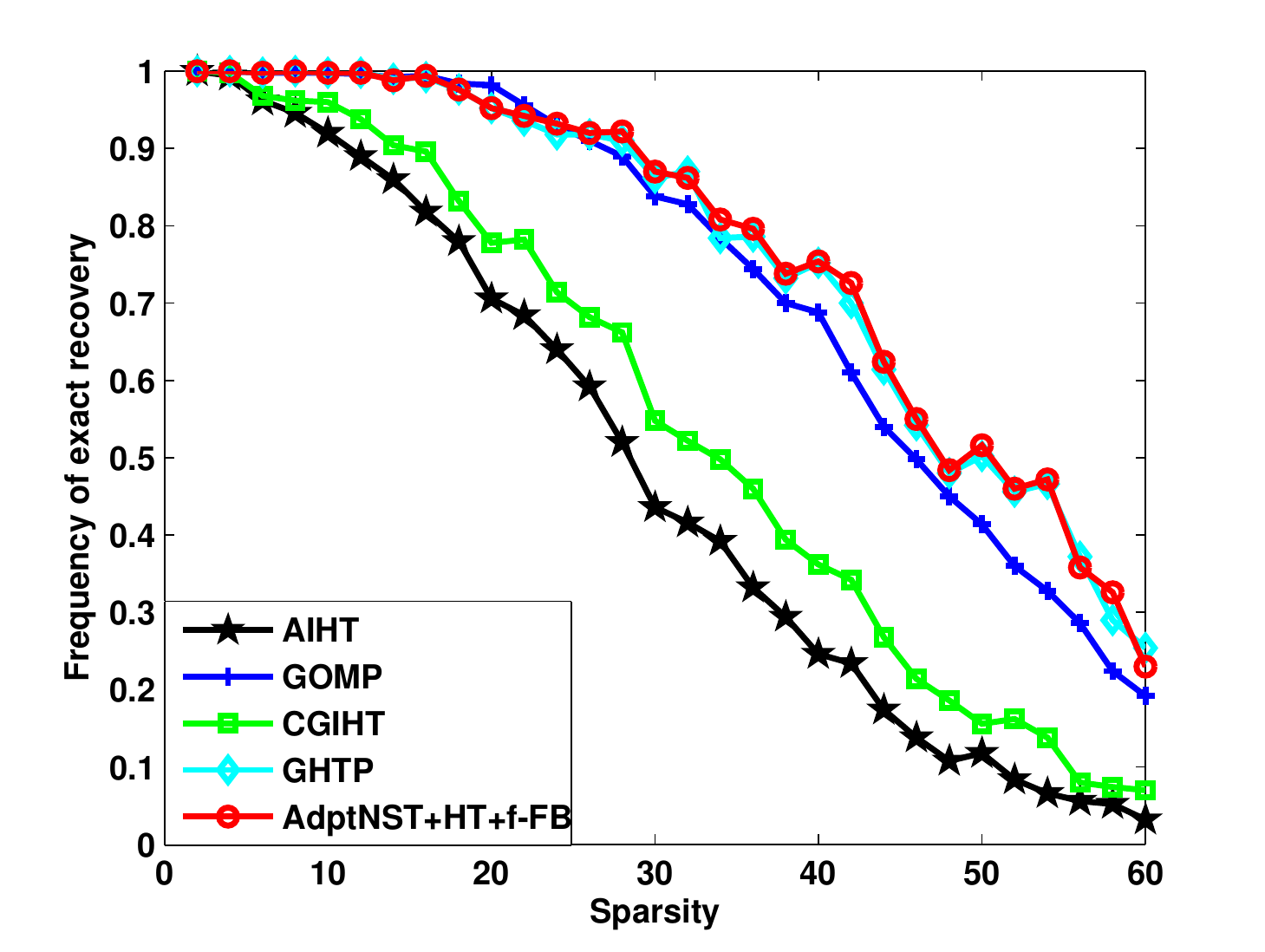}
&\includegraphics[width=5.2cm]{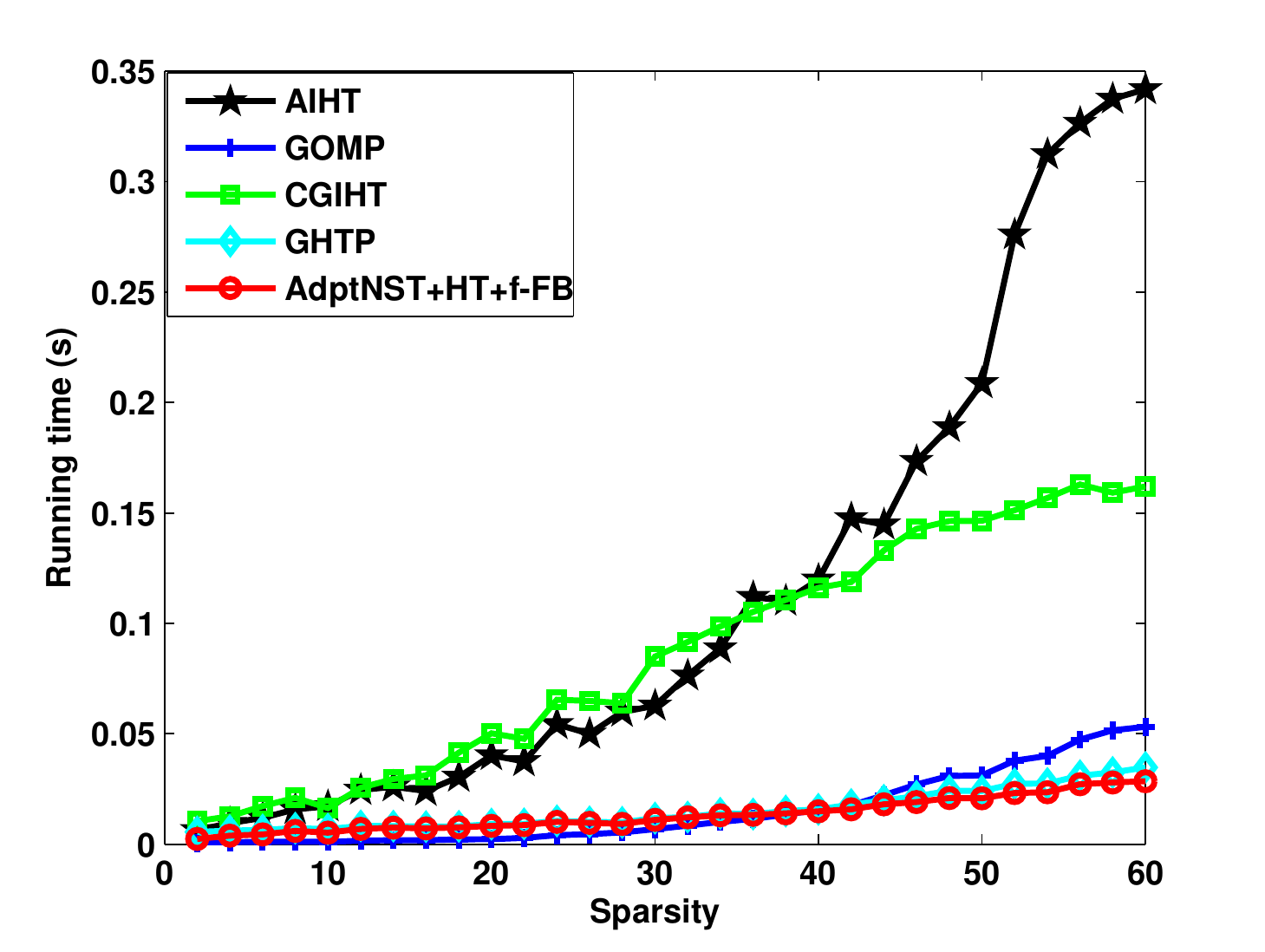}\\
\end{tabular}
\caption{(left to right): Schematic illustration of the super-resolution.  Frequency of successful recoveries using complex measurement matrix. Running time using complex measurement matrix.}
\label{challengefig5}
\end{figure}
\begin{equation}\label{supe}
x(t)=\sum\limits_{j=1}^{J}c_{j}\delta_{t_{j}},
\end{equation}
where $\delta_{t_{j}}$ is a Dirac measure at the location $t_{j}$, and $c_{j}$ is the corresponding amplitude. We want to recover the exact locations and amplitudes from samples of $x(t)$. The Fourier transform of (\ref{supe}) can be written as
\begin{equation}\label{wow}
y(\omega)=\sum\limits_{j=1}^{J}c_{j}\text{e}^{-i2\pi\omega t_{j}},
\end{equation}
which is in fact the problem of spectral estimation in signal processing.

Given the number of locations $J$, we approximate the locations $\{t_{j}\}$ by a subset that belongs to discrete grids $\mathcal {G}=\{g_{1},\ldots,g_{N}\}$ and denote amplitudes as $x\in\mathbb{C}^{N}$. It follows that the nonzero entries of $x$ equal to amplitudes $\{c_{j}\}$ when the grid point in $\mathcal {G}$ approaches the location $t_{j}$. Let $y=(y(\omega_{m}))\in\mathbb{C}^{M}$ be the data vector and the measurement matrix be $A\in\mathbb{C}^{M\times N}$ with $A_{mn}=(\text{e}^{-i2\pi\omega_{m} g_{n}})_{m=1}^{M}$, $n=1,\ldots,N$. The spectral estimation model would have the same linear equation
\begin{equation}\label{superre}
y=Ax,
\end{equation}
which in turn can be cast as sparse signal recover problems with complex measurement matrix $A$.

We first studied an example using AdptNST+HT+$f$-FB to illustrate the implementation of the algorithm in spectral estimation problems.  The frequencies $\{t_{j}\}$ $(j=1,\ldots,30)$ in (\ref{wow}), i.e., the locations in (\ref{supe}) are uniformly generated over $(0,1)$. The length of $y$ is $200$ and the number of grids $\mathcal {G}$ over $[0,1]$ is $300$, i.e., $A\in\mathbb{C}^{200\times 300}$. As shown in {\em Figure \ref{challengefig5}} (see the first figure), our algorithm can recover the exact locations and amplitudes. Furthermore, for comparison, we fix the parameters at $M=200$ and $N=300$. The experiments are carried out by varying the number of locations $J$ from $2$ to $60$. We can see from {\em Figure \ref{challengefig5}} that the performances of all algorithms degrade with increasing the number of locations and the complex measurement matrix seems to be unstable compared to the real case. Besides, AdptNST+HT+$f$-FB and GHTP deliver the best performances in terms of recovery accuracy and execution-time.

\section{Conclusions}
A class of adaptive iterative thresholding algorithms for signal reconstruction is thoroughly studied based on null space tuning, hard thresholding, and $f$-feedbacks. Analytical convergence analysis and proofs of uniform convergence of the algorithms are carried out.  Studies show that by selecting an appropriate number of indices per iteration, the convergence is significantly improved without reducing the recovery accuracy.  The theoretical findings are demonstrated and supported by extensive numerical simulations. Moreover, the experimental results show that the AdptNST+HT+$f$-FB algorithms have obviously advantageous balance of efficiency, adaptivity and accuracy compared with other state-of-the-art greedy iterative algorithms.

\section*{Acknowledgments}
 This work was partially supported by the National Natural Science Foundation of China (Grant Nos.61379014), and the NSF of USA (DMS-1313490, DMS-1615288).

\end{document}